\numberwithin{equation}{section}
\theoremstyle{plain}
\newtheorem*{lem*}{\protect\lemmaname}
\numberwithin{equation}{section}
\theoremstyle{plain}
\newtheorem{theorem}{Theorem}
\newtheorem{lemma}[theorem]{Lemma}
\newtheorem{proposition}{Proposition}
\providecommand{\lemmaname}{Lemma}
\begin{document}
\title{Detecting Differences Between Correlation-Matrix Populations due to Single-variable Perturbations, with Application to Resting State fMRI}
\author[1]{Itamar Faran}
\author[2]{Michael Peer}
\author[3]{Shahar Arzy}
\author[1]{Yuval Benjamini
\footnote{Address correspondence to yuval.benjamini@mail.huji.ac.il}}

\affil[1]{\small{Department of Statistics and Data Science, \emph{Hebrew University of Jerusalem}}}
\affil[2]{\small{Department of Psychology, \emph{University of Pennsylvania}}}
\affil[3]{\small{Department of Medical Neurobiology, \emph{Hebrew University of Jerusalem}}}

\date{}

\maketitle

\begin{abstract}

Correlation matrices are widely used to analyze the interdependence of variables in various real-world scenarios. Often, a perturbation in a few variables leads to mild differences in many correlation coefficients associated with these variables. We propose an efficient low-dimensional model that characterizes these differences as a product of single-variable effects.
We develop methods for point estimation, confidence intervals, and hypothesis testing for this
model. Importantly, our methods can account for both the variability in individual correlation matrices and for within-group variability. In simulations, our model shows increased power
compared to competing approaches.
We use the model to analyze resting-state functional MRI correlation matrices in patients with transient global amnesia and healthy controls. Our model detects significant decreases in synchronization for the patient population in several brain regions, which could not have been detected using previous methods without prior knowledge.
Our methods are available in the open-source package \emph{github.com/itamarfaran/corrpops}.

\end{abstract}

\section{Introduction}
\label{section:introduction}

Scientists are often interested in detecting differences
in relations between individual variables and the rest of the system,
across design conditions. 
Such differences can occur due to local events affecting one or few variables for some but not all conditions.
These may include the injection of noise to the affected variables, a change in their measurement process,
or a breakdown in the synchronization between these variables and the rest of the system.
We term these events \emph{single variable perturbation}. 
Application examples include detecting changes in the relation of particular stocks or markets with respect to the financial portfolio \citep{longin1995correlation},
detecting distributional differences (\emph{data-drifts}) between train and test sets when porting prediction models \citep{rabanser2019failing}, 
and detecting corrupt sensors \citep{ricardo1996}. 

In computational cognitive neuroscience, differences between healthy and patient populations in the temporal synchronization of neural activity across brain regions, measured at rest using functional MRI (RS-FMRI), have been used to characterize neurological diseases and disorders \citep{biswal1995functional}.  RS-fMRI is a short protocol that can be added to a standard MRI examination,
doesn't require any patient cooperation,
yet may carry functional information in cases where the disturbance is not apparent structurally; it therefore emerged as a potential source of both neuro-cognitive discoveries and neuro-psychiatric clinical implications. Usually, the sequences are aggregated into correlation coefficients of the BOLD signal (Blood-oxygen-level-dependent) between different brain regions, resulting in person-specific correlation matrices that represent each individual's functional connectivity across brain regions \citep{lee2013resting}.In some cases, the differences can be characterized as a reduced association between some brain regions and the rest of the brain \citep{peer2014reversible}. 

We study a two-group design, where $n_d$ units (or subjects) are observed from the case group ($\mathcal{D}$, or disease) and $n_h$ are observed from the control group ($\mathcal{H}$ or healthy). 
For each unit, a $p$-dimensional measurement vector is sampled repeatedly ($T$ times), and these are summarized as a correlation matrix $R_i\in \mathcal{C}_p$\footnote{We use $\mathcal{C}_p$ for the set of symmetric positive semi-definite $p\times p$ matrices with a unity diagonal.}. The goal of the researcher is to identify and characterize differences between the expected case correlation matrix $\Lambda^{D}: = E_{i\in \mathcal{D}}[R_i]$ and the expected control correlation matrix $\Lambda^{H}: = E_{i\in \mathcal{H}}[R_i]$. 
Under this formulation, there are $m := p\left(p-1\right)\slash2$ parameters that can be compared across the two groups. The two-group design is commonly used in RS-fMRI studies: there each unit is an individual, and each $p$ measurement vector 
corresponds to the brain activity averaged into $p$ brain regions over a short ($\sim1 s$) temporal window. 
$R_i$ measures the temporal correlations between regions in subject $i$.

In the context of correlation matrices,
single-variable perturbations manifest as changes in the full rows (and full columns) corresponding to the perturbed variables.
These differences would typically be directional: if noise is added to a variable in one condition, or the variable's signal is disturbed, that variable's correlations with all other variables would be smaller in the perturbed condition. Motivated by this principle, we propose a parsimonious model for expressing differences 
between correlation matrices in terms of individual row-wise (and column-wise) effects.

Methods for detecting differences between correlation matrices have been studied extensively for the two-samples design \citep{lawley1963testing,aitkin1968tests, jennrich1970asymptotic},
with recent methods tackling the case where $T$ is not large compared to the dimension of the matrix $p$.
Most methods focus either on detecting global differences in the full-matrix \citep{schott2007testing, zheng2019test},
or on detecting large \emph{local} differences in individual correlation-coefficients \citep{cai2013two}. 
Additional strategies for dealing with the high-dimension include ridge regularization of the difference matrix or adaptive soft thresholding \citep{cai2016inference}.
The reader may refer to \cite{cai2017global} and the discussion in \cite{na2019estimating} for an overview of current methods.
However, whereas global tests cannot characterize the nature of the difference, 
local tests that focus on single-coefficients suffer from low power when the signal is weak and diffuse. 
Specialized methods can identify weaker signals by focusing on intermediate alternatives such as differences in cliques of variables \citep{bodwin2018testing}, on the leading eigenvector of the differential matrix \citep{ZhuTesting}, or in sub-networks \citep{guo2008unified}. 

An additional challenge of the two-group design is that each condition is represented by several correlation matrices representing different units. 
When researchers expect non-negligible differences between units in the group, significant differences are those that exceed this variability. Alas, most statistical methods discussed above do not accommodate this within-group variation. 
Therefore, researchers in the field turn to standard mass-univariate approaches,
calculating a two-group T-statistic for each correlation coefficient \citep{baker2014disruption, brier2012loss, liu2014impaired, peer2014reversible}.
The resulting p-values or intervals are then corrected to account for multiple comparisons.
Because the number of tests grows quadratically with the dimension of the correlation matrix,
the corrections for multiplicity reduce the statistical power.
This loss of power is considerable when working with small samples,
where power is low to begin with\footnote{In neuroimaging, collecting larger cohorts has become more common (see the UK Biobank \citep{sudlow2015uk} as an extreme example), but nevertheless smaller samples are still common when researching rare and time-sensitive phenomenons, such as in the transient global amnesia example discussed below.}. To address this, researchers often artificially reduce the number of regions considered.

In this work, we propose a model attuned to detecting single-variable perturbations that affect the correlation between the groups, 
while accounting for the variation within the group. Our model describes the difference between groups using a single parameter per variable, allowing efficient comparison between the two groups of correlation matrices:
\begin{equation}
\Lambda^H :=  E_{i\in \mathcal{H}}[R_i] =\Theta, \qquad \Lambda^D :=  E_{i\in \mathcal{D}}[R_i] = g(\Theta, \alpha),
\label{eq:meanModel}
\end{equation}
for correlation matrix  $\Theta\in \mathcal{C}_p$, $p$-dimensional parameter vector $\alpha \in \mathcal{A} \subset R^p$, and a link function $g:\mathcal{C}_p\times \mathcal{A}\ \to \mathcal{C}_p$. 
$\Theta$ directly models the correlation matrix of the control group, whereas the parameter vector $\alpha$ specifies a directional change in the correlations for each of the $p$ variables. $g$ describes the structure of the divergence between the groups.
As our main prototype, we look at the multiplicative link function: 
\begin{equation}
g_{jk}\left(\Theta,\alpha\right)=\begin{cases}
\Theta_{jk}\alpha_{j}\alpha_{k} & j \neq k \\
1 & j=k,
\end{cases}.
\label{eq:MultLink}
\end{equation}
in which the divergence of the correlation matrices in index $j,k$ is determined
by the interaction of the correlation decay (or growth) of columns-and-rows $j$ and $k$.
A finding of $\alpha_{j} < 1$ would mean that the correlations of the $j$ variable with other variables have decayed in the case group, or become smaller in magnitude, compared to those of the control group. 
Using this parametrization, the dimension of $\alpha$ (and the number of comparisons) grows \emph{linearly} with the dimension of the correlation matrix,
making the model suitable for small sample sizes.

The multiplicative link function incorporates two assumptions on the data.
First, the difference between the groups can be summarized into a row-and-column-wise structure. 
Second, the decay (or increase) in correlations is proportional to the original magnitude of the correlation \-- originally stronger correlations of the perturbed variables will suffer from greater decay. That is, $\alpha_j<1$ represents a decay in correlations for the $j$'th variable, but the magnitude of this decay depends on the correlations in the control group. We find that the vector of decay parameters ($\alpha$) in the multiplicative model can efficiently describe local differences between the groups when the expected correlation matrices are dense. 

We propose several strategies to improve the estimation of the parameters
and of their uncertainty in the presence of both modeling error and within-group variation. 
The variables' effects are estimated by finding the roots of an estimation equation that accounts for the asymptotic covariance of correlation matrices \citep{nel1985matrix}. 
We develop robust estimators of the standard deviations using a General Estimating Equations (GEE) framework \citep{liang1986longitudinal},
or a less parametric but more computationally intensive two-sample Jackknife procedure \citep{efron1981jackknife}.
Both of these approaches are designed to measure population variability.
Confidence intervals and p-values are based on the approximate Gaussian distribution of GEE solutions. 
%Our methods are implemented in an open-source package.
%\emph{corrpops} ('corrpops' GitHub repository, \citeyear{faran2021corrpops}). 

\emph{Application to Resting-state fMRI Data}
We demonstrate our model on RS-fMRI data gathered  by \cite{peer2014reversible} from  patients with a unique and rare phenomenon, transient global amnesia (TGA), a phenomenon believed to arise from a perturbation of a single brain region (the Hippocampus) which affects a network of other brain regions.
Patients diagnosed with TGA present mainly an acute and complete loss of ability to retain new information (\emph{anterograde amnesia}) and a gradually declining loss of memories to the recent time (\emph{retrograde amnesia}) that resolves autonomously within $\sim24$  hours.
In recent years, it was discovered that small well-defined lesions may be evident in TGA patients in a certain MRI protocol in a sub-part of the hippocampus,
a brain structure that plays a major role in memory processes
\citep{sedlaczek2004detection, bartsch2006selective, bartsch2007evolution}.
Furthermore, an RS-fMRI study recorded within the acute stage revealed a reduction of  correlations in hippocampal-related regions \citep{peer2014reversible}.
These disturbances were found at the single-patient level,
and their magnitude was correlated to the clinical severity of the impairment.
Peer's study limited the search of connectivity disturbances to the hippocampus in advance, and therefore its findings might have been obscured if no prior hypothesis was assumed.
The restricted structural disturbance of TGA and the small sample size (due to TGA rareness) serve as a unique setting to test our model.
\begin{figure}[!th]
    \begin{center}
        \includegraphics[scale=0.5]{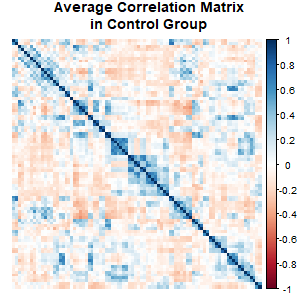}
        \includegraphics[scale=0.5]{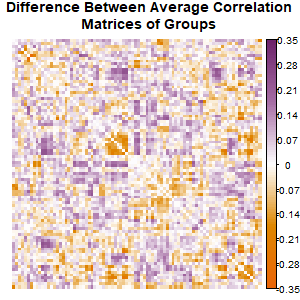}
        \includegraphics[scale=0.5]{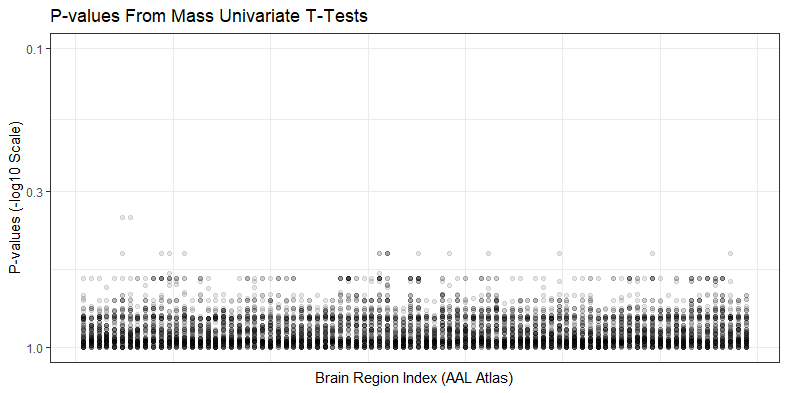}
        \caption{
            \small Comparison of the two average correlation matrices from patients with TGA and healthy controls \citep{peer2014reversible}.
            Top Left: Average correlation matrix of the control group ($n_h=17$).
            Top Right: Differences between average correlation matrices of TGA patients and control subjects.
            Bottom: Manhattan plot for p-values computed using two-sample t-tests for the correlation coefficients of each region pair
            (p-value for the correlation of region $i,j$ appears twice, once in index $i$ and once in index $j$). p-values are FDR-BH corrected \citep{benjamini1995controlling}.
            The minimal corrected p-value is 0.37.
            No statistically significant differences would be found without constraints on the search space.}
            \label{fig:example}
    \end{center}
\end{figure}

% To reiterate, no other gray-matter structure besides the hippocampus is similarly anatomically isolated. 
% We detect significant decay in correlations in the hippocampus related regions.

% In this paper we develop estimation and inferential procedures for divergence of each individual column (variable) between two correlation matrices.

The rest of this paper is organized as follows:
In Section \ref{section:preliminaries} we introduce notations  and discuss the second moment of correlation matrices,
which we will use for our estimation and inferential methods. 
In Sections \ref{section:model}, \ref{section:Model-Estimation} we develop an estimator for the parameter-vector of differences between the populations, and two possible estimators of its variance.
% Inference relies on the asymptotic  normality of the estimators; in Section \ref{section:simulations} we simulate small samples to verify that the approximations hold.
In Section \ref{section:simulations} we validate and evaluate our model on simulated data in various settings.
In Section \ref{section:realdata} we apply our methods on RS-fMRI data comparing patients with TGA to controls. Our method finds several significant differences, whereas the mass-univariate analysis finds none.
We end with a discussion of potential extensions. 

\section{Notations \& Preliminaries}
\label{section:preliminaries}
Let $X_i$ be an observed data matrix of size $T_i \times p$ measuring $p$ variables (columns) measured $T_i$ times (rows) for subject $i$.
In the fMRI data, the $p$ columns of $X_i$ correspond to $p$ brain regions, sampled at $T_i$ time points (the rows).
We assume that the measurements (rows) of $X_i$ are centered (mean $\boldsymbol{0}$), and that $T_i$ is constant over all subjects ($T_i=T$). 
For each subject, the measurement matrix $X_i$ is summarized into the empirical correlation matrix $R_i$.
Define the empirical covariance operator $cov\left(X_i\right)\coloneqq T^{-1}X_i^tX_i$,
and the scaling operator on $W_i=cov\left(X_i\right)$ to be
$scale\left(W_i\right) \coloneqq diag\left(W_i\right)^{-\nicefrac{1}{2}}\cdot W_i\cdot diag\left(W_i\right)^{-\nicefrac{1}{2}}$. (We use lower-case for matrix operators).
The empirical correlation operator on $X_i$ is $corr\left(X_i\right) \coloneqq scale\left(cov\left(X_i\right)\right)$.
Finally, $R_i = corr(X_i)$ is the observed correlation matrix for subject $i$.
%Denote the space of correlation matrices of dimension $p$ with $\mathcal{C}_{p}$.
Since correlation matrices are symmetrical, they can be fully characterized using the $m=\binom{p}{2}=p\left(p-1\right)\slash2$ elements of the upper (or lower) triangle,
so we vectorize them in some of the formulas.
% \footnote{\textbf{not} to be confused with the classic $\text{vec}$ operator (stacking the columns of a matrix one on top of the other)} them before taking moments.
Define $\text{vec}:\mathbb{M}_{p}\left(\mathbb{R}\right)\mapsto\mathbb{R}^{m}$ as the vectorization operator with
$\text{vec}\left(R\right) \coloneqq \left(R_{12},R_{13,}\cdots,R_{1p},R_{23},\cdots R_{p-1,p}\right)^{t}$,
or alternatively $\vec{R} \coloneqq \text{vec}\left(R\right)$. 
We use the $\circ$ notation to denote the Hadamard product between matrices; $E[\cdot]$ and upper-case $Cov(\cdot)$ to denote the first and second moments of random vectors or matrices.
We denote the $p$-dimensional multivariate Gaussian distribution by $\mathcal{N}_p(\boldsymbol{\mu},\Sigma)$ and its PDF by $\phi_{p}\left(x,\boldsymbol{\mu},\Sigma\right)$. Finally, we denote the Matrix-Gaussian distribution with $Y\sim\mathcal{MN}_{T \times p}\left(\boldsymbol{M},\Delta,\Sigma\right)$,
remembering that for rows $Y_{t \cdot}\sim \mathcal{N}_p(\boldsymbol{M}_{t\cdot},\Delta_{tt} \cdot \Sigma)$, and for columns $Y_{\cdot j}  \sim \mathcal{N}_T(\boldsymbol{M}_{\cdot j},\Delta \cdot \Sigma_{jj})$. 

\subsubsection*{Moments of Empirical Correlation Matrices}
\label{subsection:preliminaries}

% In this section we review distributional properties of correlation matrices derived from multivariate Gaussian data with a row-wise dependence.
% These properties are found in the literature \citep{neudecker1990asymptotic, schott2005matrix},
% along with the effect of using time-dependent observations (rather than \emph{i.i.d})
% to estimate the correlation matrix \citep{afyouni2019effective, efron2008row, efron2012large}.

Here we review the moments of correlation matrices estimated from correlated Gaussian vectors (for general results see 
\cite{neudecker1990asymptotic} and \cite{schott2005matrix};
for the effect of temporal correlation on inference see \cite{afyouni2019effective} and \cite{efron2008row, efron2012large}).
We use these results in estimating our model below.
Readers may choose to skip directly to Section \ref{section:model}. 

Let $X$ be a random data matrix of size $T \times p$,
and assume that $X\sim\mathcal{MN}_{T\times p}\left(\boldsymbol{0},\Delta,\Sigma\right)$.
When the rows are \emph{i.i.d} ($\Delta=I_{T}$), $W=cov\left(X\right)$ is the maximum likelihood estimator for $\Sigma$,
and $T\cdot W$ follows a $Wishart_{p}\left(\Sigma,T\right)$ distribution.
The first two moments of $W$ are then given by: 
\begin{align}
    E\left[W\right] & =\Sigma,\\
    T\cdot Cov\left(W_{ij},W_{kl}\right) & =\Sigma_{ij,kl}^{\left(2\right)}=\Sigma_{ik}\Sigma_{jl}+\Sigma_{il}\Sigma_{jk},
\end{align}
as shown for example in \cite{efron2012large}.
However, the elements of the empirical correlation matrix
have distinctly smaller variances due to the scaling.
The following lemma describes the limiting covariance of values in the empirical correlation matrix: 

\begin{lemma}
    \label{lemma:cor-cov}
    The asymptotic moments of the empirical correlation matrix for i.i.d samples.
    Let $R=corr(X)$ for $X\sim\mathcal{MN}_{T\times p}\left(\boldsymbol{0},I_{T},\Sigma\right)$.
    Denote by $\boldsymbol{\rho} = scale(\Sigma)$, or
    $\boldsymbol{\rho}_{ij}=\Sigma_{ij}\slash\sqrt{\Sigma_{ii}\Sigma_{jj}}$.
    Then $\underset{T\rightarrow\infty}{\lim}E\left[R\right]=\boldsymbol{\rho},$ and $\underset{T\rightarrow\infty}{\lim}T\cdot Cov\left(R_{ij},R_{kl}\right)=C_{ij,kl}\left(\boldsymbol{\rho}\right)\:$, where
    \begin{equation}
        C_{ij,kl}\left(\boldsymbol{\rho}\right)\coloneqq\frac{\rho_{ij}\rho_{kl}}{2}\left(\rho_{ik}^{2}+\rho_{il}^{2}+\rho_{jk}^{2}+\rho_{jl}^{2}\right)-\rho_{ij}\left(\rho_{ik}\rho_{il}+\rho_{jk}\rho_{jl}\right)
        \label{eq:cor-cov}
    \end{equation}
    \[-\rho_{kl}\left(\rho_{ik}\cdot\rho_{jk}+\rho_{il}\cdot\rho_{jl}\right)+\left(\rho_{ik}\rho_{jl}+\rho_{il}\rho_{jk}\right).\]
\end{lemma}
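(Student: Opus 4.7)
The plan is to leverage the exact Wishart moments for $W = T^{-1} X^t X$ and then apply the multivariate delta method to the smooth scaling map $W \mapsto R = \text{scale}(W)$. Since the rows of $X$ are i.i.d.\ Gaussian, $T\cdot W \sim \text{Wishart}_p(\Sigma,T)$, so $E[W]=\Sigma$ and $T\cdot \text{Cov}(W_{ij},W_{kl}) = \Sigma_{ik}\Sigma_{jl}+\Sigma_{il}\Sigma_{jk}$ exactly (this is the second moment recalled just before the lemma), and $\sqrt{T}(\text{vec}(W)-\text{vec}(\Sigma))$ converges in distribution to a centered Gaussian with that covariance structure. The scaling map is smooth on the open cone of positive-definite matrices, so the delta method is available and the convergence transfers to $R$.

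Second, I would compute the Jacobian of $R_{ij}(W) = W_{ij}/\sqrt{W_{ii}W_{jj}}$ at $W=\Sigma$. The only nonzero partials are
\[
\frac{\partial R_{ij}}{\partial W_{ij}}\bigg|_{\Sigma} = \frac{1}{\sqrt{\Sigma_{ii}\Sigma_{jj}}}, \qquad \frac{\partial R_{ij}}{\partial W_{ii}}\bigg|_{\Sigma} = -\frac{\rho_{ij}}{2\Sigma_{ii}}, \qquad \frac{\partial R_{ij}}{\partial W_{jj}}\bigg|_{\Sigma} = -\frac{\rho_{ij}}{2\Sigma_{jj}},
\]
with the analogous three partials for $R_{kl}$. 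The first-moment statement $\lim E[R]=\rho$ then follows from a first-order expansion, since the leading bias of $R_{ij}$ about $\rho_{ij}$ is $O(1/T)$ and vanishes in the limit (continuity of the scaling map together with boundedness of $R$ makes the mean convergence essentially automatic).

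For the asymptotic covariance, I would plug the Jacobian into the delta-method formula
\[
T\cdot \text{Cov}(R_{ij},R_{kl}) \;\longrightarrow\; \sum_{(a,b),(c,d)} \frac{\partial R_{ij}}{\partial W_{ab}}\cdot\frac{\partial R_{kl}}{\partial W_{cd}}\cdot\bigl(\Sigma_{ac}\Sigma_{bd}+\Sigma_{ad}\Sigma_{bc}\bigr),
\]
where the sum runs over the nine pairs drawn from $\{(i,j),(i,i),(j,j)\}\times\{(k,l),(k,k),(l,l)\}$. The scaling denominators $\sqrt{\Sigma_{aa}\Sigma_{bb}}$ cancel against $\Sigma$-entries using the identity $\Sigma_{ab}/\sqrt{\Sigma_{aa}\Sigma_{bb}}=\rho_{ab}$, leaving a polynomial in $\rho$. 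Grouping the nine contributions by the type of their derivatives should reproduce the four blocks of Equation~(\ref{eq:cor-cov}): the four diagonal--diagonal pairs yield $\tfrac{1}{2}\rho_{ij}\rho_{kl}(\rho_{ik}^2+\rho_{il}^2+\rho_{jk}^2+\rho_{jl}^2)$, the two sets of diagonal--off-diagonal cross pairs yield $-\rho_{ij}(\rho_{ik}\rho_{il}+\rho_{jk}\rho_{jl})$ and $-\rho_{kl}(\rho_{ik}\rho_{jk}+\rho_{il}\rho_{jl})$, and the single $(i,j)\times(k,l)$ pair yields $\rho_{ik}\rho_{jl}+\rho_{il}\rho_{jk}$.

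The main obstacle is purely algebraic bookkeeping in the nine-term sum, and in particular tracking the $\tfrac12$ factors generated by the diagonal partials. These factors are exactly what accounts for the gap between the unscaled Wishart covariance and the smaller correlation-matrix covariance, and getting the coefficients right is the only place where one can easily make a sign or factor-of-two error. No additional probabilistic ingredient beyond the Wishart CLT and the delta method is required.
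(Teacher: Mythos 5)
Your proposal is correct and follows exactly the route the paper itself indicates: the paper does not spell out a proof but cites \cite{pearson1898vii}, \cite{nel1985matrix} and \cite{schott2005matrix} and notes the result is ``derived using a first-order approximation of the $corr(\cdot)$ function on the empirical covariance matrix,'' which is precisely your delta-method argument applied to the Wishart moments of $W$. Your Jacobian entries and the grouping of the nine cross-terms (including the factors of $\tfrac{1}{2}$ from the diagonal partials) do reproduce the four blocks of Equation~\eqref{eq:cor-cov}, so the sketch is sound and needs no further probabilistic input.
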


The lemma was introduced by \cite{pearson1898vii} and is listed by \cite{nel1985matrix} and \cite{schott2005matrix},
and is derived using a first-order approximation of the $corr\left(\cdot\right)$ function on the empirical covariance matrix.
The convergence is of the moments of $R$, in that higher order terms vanish as sample size increases. Finally, we can turn the element-wise operations in \eqref{eq:cor-cov} into a matrix operator $C:\mathcal{C}_p \to M_m(R)$: 
\begin{equation}
C(\vec{\rho}) \stackrel{\cdot}{=} T \cdot Cov(\vec{R}).
\end{equation}
Later, we will plugin the observed correlation matrices $R_i$ into the operator $C$.

\subsubsection*{Accounting for Temporal Correlation}
In fMRI time-series data (or other data characterized by temporal correlation), the correlation between adjacent rows can be large,
and hence less information is added from each row compared to the \emph{i.i.d} case.
While $W$ remains an unbiased estimator of $\Sigma$, the covariance of $W$ and $R$ will be larger than the expression in \eqref{eq:cor-cov}.
The following correction, due to \cite{efron2012large},
identifies an effective degrees of freedom $T_{eff}$, which can be estimated directly from $\Sigma$.
We cite the result with no proof. 

\begin{lemma}
    \label{lemma:cor-cov-dep}
    The asymptotic moments of empirical correlation matrices for correlated samples.
    Let $X\sim\mathcal{MN}_{T\times p}\left(\boldsymbol{0},\Delta,\Sigma\right)$
    for a general positive-definite $\Delta$,
    and let $W = cov(X)$ and $R = corr(X)$. 
    Define $\psi(\Sigma)$ to be the root-mean-square correlation:
    $\psi=m^{-1}\sum_{i<j}\rho_{ij}^{2}$. 
    Then the covariance of the correlation matrix shrinks like the effective degrees of freedom $T_{eff}(T,\psi)=T\slash\left(1+\left(T-1\right)\cdot\psi^{2}\right)$, 
    namely 
    \[T_{eff}(T,\psi)\cdot Cov\left(W_{ij},W_{kl}\right)=\Sigma_{ij,kl}^{\left(2\right)},\qquad 
    \underset{T_{eff}\rightarrow\infty}{\lim}T_{eff}(T,\psi)\cdot Cov\left(R_{ij},R_{kl}\right)=C_{ij,kl}\left(\boldsymbol{\rho}\right).\]
\end{lemma}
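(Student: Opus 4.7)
The plan is to tackle the two claims in sequence: first establish the covariance of the empirical covariance matrix $W$ directly using Gaussian fourth-moment identities, then transfer this to $R$ by the delta method, leveraging the computation already embedded in Lemma \ref{lemma:cor-cov}.

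\textbf{Step 1 (moments of $W$).} Under $X\sim\mathcal{MN}_{T\times p}(\boldsymbol{0},\Delta,\Sigma)$ with the standard normalization $\Delta_{tt}=1$, all entries of $X$ are jointly centered Gaussian with pairwise covariance $Cov(X_{ta},X_{sb})=\Delta_{ts}\Sigma_{ab}$. I write $W_{ij}=T^{-1}\sum_t X_{ti}X_{tj}$ and apply Isserlis' theorem to each fourth-moment $E[X_{ti}X_{tj}X_{sk}X_{sl}]$. The pairing that produces $E[W_{ij}]E[W_{kl}]$ cancels, and the remaining two pairings collapse to
\begin{equation*}
    Cov(W_{ij},W_{kl}) \;=\; \frac{1}{T^{2}}\!\left(\sum_{t,s}\Delta_{ts}^{2}\right)\!\left(\Sigma_{ik}\Sigma_{jl}+\Sigma_{il}\Sigma_{jk}\right) \;=\; \frac{\|\Delta\|_{F}^{2}}{T^{2}}\,\Sigma^{(2)}_{ij,kl}.
\end{equation*}
Thus the first claim reduces to identifying $T_{eff}^{-1}=\|\Delta\|_{F}^{2}/T^{2}$.

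\textbf{Step 2 (delta method to $R$).} Since $R=scale(W)$ and $scale$ is smooth on the open cone of positive-definite matrices, a first-order Taylor expansion around $\Sigma$ yields $\vec{R}-\vec{\boldsymbol{\rho}}=J(\vec{W}-\vec{\Sigma})+O_{p}(\|W-\Sigma\|^{2})$, where $J$ is the Jacobian of the vectorized $scale$ operator at $\Sigma$. Because $W-\Sigma=O_{p}(T_{eff}^{-1/2})$, the quadratic remainder contributes only $o(T_{eff}^{-1})$ to $Cov(\vec{R})$, giving
\begin{equation*}
    Cov(\vec{R}) \;=\; \frac{1}{T_{eff}}\,J\,\Sigma^{(2)}\,J^{t} \;+\; o(T_{eff}^{-1}).
\end{equation*}
The identity $J\,\Sigma^{(2)}\,J^{t}=C(\boldsymbol{\rho})$ is precisely what Lemma \ref{lemma:cor-cov} establishes in the i.i.d case ($\Delta=I_T$, where $\|\Delta\|_{F}^{2}/T^{2}=1/T$), so the same algebra carries over verbatim after replacing $T$ by $T_{eff}$.

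\textbf{Main obstacle.} The nontrivial part is writing $\|\Delta\|_{F}^{2}/T^{2}$ in the specific closed form $(1+(T-1)\psi^{2})/T$ stated in the lemma. Splitting the Frobenius norm into diagonal and off-diagonal contributions yields $T_{eff}=T/(1+(T-1)\overline{\Delta^{2}})$ where $\overline{\Delta^{2}}$ is the mean squared off-diagonal entry of $\Delta$; the stated formula thus requires the identification $\overline{\Delta^{2}}\approx\psi^{2}$ between the mean squared temporal correlation and the root-mean-square spatial correlation. This is not a generic identity but follows from the row-and-column symmetry assumption in \cite{efron2012large}, which I would invoke rather than rederive. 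Beyond this identification the proof is a mechanical combination of Isserlis' theorem and the delta method.
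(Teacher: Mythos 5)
The paper does not actually prove this lemma---it states ``We cite the result with no proof,'' deferring entirely to \cite{efron2012large}---so there is no in-paper argument to compare against; your proposal goes further than the paper does. Your Step 1 is correct: with $\Delta_{tt}=1$, Isserlis' theorem gives the exact finite-$T$ identity $Cov(W_{ij},W_{kl})=\bigl(\|\Delta\|_F^2/T^2\bigr)\Sigma^{(2)}_{ij,kl}$, hence $T^2/\|\Delta\|_F^2=T/\bigl(1+(T-1)\overline{\Delta^2}\bigr)$ with $\overline{\Delta^2}$ the mean squared off-diagonal entry of $\Delta$; and Step 2 is the same delta-method argument that underlies Lemma \ref{lemma:cor-cov}, so the transfer to $R$ is fine provided the normalizing factor diverges.

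The genuine gap is exactly where you flag it, and it deserves to be stated as a hypothesis rather than waved through: the lemma's $\psi$ is computed from $\Sigma$ (the spatial correlations $\rho_{ij}$), while your derivation produces the temporal quantity $\overline{\Delta^2}$. The identification $\overline{\Delta^2}=\psi(\Sigma)^2$ is simply false for arbitrary positive-definite $\Delta$ and $\Sigma$ in the matrix-normal model---the two parameters are unconstrained by each other---so it cannot ``follow'' from the model as stated; in Efron's work it is a row/column duality for (doubly standardized) \emph{sample} correlations, i.e., an additional assumption or an estimation device, not a population identity. A complete proof must either add the assumption that the row and column root-mean-square correlations coincide, or restate the lemma with $\psi$ built from $\Delta$. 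Relatedly, your limit statement for $R$ needs $T_{eff}\to\infty$, i.e., $\|\Delta\|_F^2=o(T^2)$ (true, e.g., for summable autocorrelations); note that if $\psi$ is literally the fixed spatial quantity, then $T_{eff}\to 1/\psi^2<\infty$ and the delta-method limit as written would not even make sense, which is further evidence that the $\Delta$-based reading is the one your argument (correctly) proves.
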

To summarize the above lemmas, the first two moments of the vectorized empirical correlation matrix $\hat{R}$ can be described
\begin{equation}
    E\left[\vec{R}\right] \underset{T_{eff}\to \infty}{\longrightarrow} \vec{\boldsymbol{\rho}},\qquad Cov\left[\sqrt{T_{eff}}\cdot\left(\vec{R}-\vec{\boldsymbol{\rho}}\right)\right] \underset{T_{eff}\to \infty}{\longrightarrow} C(\boldsymbol{\rho}). 
    \label{eq:corr-clt}
\end{equation}

\section{A Model for Differences Between Correlation Matrices}
\label{section:model}

The data consists of correlation matrices $\left\{R_1,R_2,...,R_{n_d + n_h}\right\} \subset \mathcal{C}_p$, 
with $i \in \mathcal{H}$ representing samples from the control group and $i \in \mathcal{D}$ samples 
from the case group. Setting together \eqref{eq:meanModel}, we can describe the matrices 
using the following additive model: 
\begin{equation}
    R_{i}  =\begin{cases}
        \Theta  + \epsilon_i & i\in\mathcal{H},\\
        g(\Theta, \alpha) + \epsilon_i & i\in\mathcal{D}.
    \end{cases}
    \qquad E[\epsilon_i]  = \mathbf{0}.
\label{eq:fullmodel}
\end{equation}
$\epsilon_i$'s are symmetric matrices with zeroed diagonals, representing the joint residuals due to (a) the noise in estimating each correlation matrix, (b) the 
variation across units within the groups, and (c) the modeling error of the group mean.
Note that the modeling error in (c) comes from the restrictive parameterization of the group means using $\Theta$  and $g(\Theta, \alpha)$.
This modeling error vanishes if the two group means are identical (that is, under the global null).

For the model to be useful, we need to identify link functions $g$ and corresponding parameter sets $(\boldsymbol{\Theta},A(\boldsymbol{\Theta}))  \subset \mathcal{C}_p\times \mathcal{A} $ where the model behaves well. We look for the following properties: 
\begin{enumerate}
    \item \textbf{Resulting with a correlation matrix:}
    For any $(\Theta, \alpha)\in 
 (\boldsymbol{\Theta},A(\boldsymbol{\Theta}))$, $g(\Theta,\alpha)$ is a member of $\mathcal{C}_{p}$, meaning a symmetric and positive semi-definite matrix with a unity diagonal.
    \item \textbf{Identifiable:}
    The model in \eqref{eq:fullmodel} is identifiable.
    \item \textbf{Invertible by} $\boldsymbol{\alpha}$\textbf{:}
    There exists $g^{-1}\left(g\left(\Theta,\alpha\right),\alpha\right)=\Theta$.
    \item \textbf{Continuously differentiable:}
    The derivatives $\frac{\partial\vec{g_{j}}}{\partial\alpha_{k}}$ are defined and continuous.

\end{enumerate}
Properties 1 and 2 are required for the parametrization in \eqref{eq:fullmodel}, and Properties 3 and 4 are used for the estimation algorithm we propose. 
Whereas the rest of this paper treats the link function abstractly,
next we discuss our prototype link. 

\subsection*{The Multiplicative Link Function}
The multiplicative link \eqref{eq:MultLink} is
our prototype for $g$ throughout this paper. 
The multiplicative link function can be presented in matrix form as 
$g\left(\Theta,\alpha\right)
=\Theta\circ\alpha\alpha^{t}+I-D_{\alpha}^{2}$,
where $D_{\alpha}$ is a diagonal matrix with entries $\left[D_{\alpha}\right]_{ii}=\alpha_i$ and 0 otherwise.
The multiplicative link is useful to model changes between the groups when the original correlation matrix is relatively  dense. This link function incorporates two assumptions regarding the directions of change between populations: 
\begin{itemize}
    \item \emph{Assumption 1:} Researchers are interested in identifying variables where the correlation profile has changed directionally between two populations (increased or reduced correlation).
    \item \emph{Assumption 2:} The expected magnitude of the difference between correlation coefficients would be  proportional to the magnitude of the original correlation.
\end{itemize}
 We expect most $\alpha_j$'s to be near 1, where $\alpha_j=1$ indicates no directional change.

With respect to TGA, we may ask if there is a perturbation in the patient group that leads to reduced connectivity between one brain region and its neighbors compared to the healthy group. In that case,
we would expect the correlations involving that region  to decay toward 0.
We would further expect the strongest decay (in absolute terms) to be in pairs originally showing the strongest correlations.
In our model, these effects would be represented by having $\alpha_j < 1$.
The model with multiplicative link meets the requirements under the following conditions:

\begin{enumerate}
    \item \textbf{Resulting with a correlation matrix:} 
    Let  $\lambda_p$ be the minimal eigenvalue of $\Theta$, and $\alpha_{m},\alpha_{M}$ are the maximal and minimal values of $\left|\alpha_{j}\right|$, respectively. Then a sufficient condition for $g(\Theta,\alpha) \in  \mathcal{C}_p$ is that 
    $\lambda_{p}\cdot \alpha_{m}^{2}\geq\alpha_{M}^{2}-1$.
    %$\lambda_{p} \geq\left(\alpha_{M}^{2}-1\right)\slash\alpha_{m}^{2}$.
  The model is therefore well defined for $\alpha_M \leq 1$, and, in the typical case where $\lambda_p, \alpha_m > 0$, for $\alpha_M$ in the neighborhood around 1.  The proof is given in the Appendix \ref{subsec:PSD-Proof}.
    
    \item \textbf{Identifiable:} 
    The multiplicative link model is identifiable
    if $\vec{\Theta}$ has a sufficiently many non-zero entries.
    Concretely, define $e^{\left(j\right)}$ the elementary vectors of size $p$, $e^{\left(j,k\right)} = e^{\left(j\right)} + e^{\left(k\right)}$,
    and $\tilde{m}$ the number of non-zero entries in $\vec{\Theta}$.
    Let $A$ be the matrix of dimension $\tilde{m} \times p$ with $e^{\left(j,k\right)}$ in it's rows,
    corresponding to $\vec{\Theta}$'s non-zero entries.
    Then, assuming $\alpha_j > 0$ for all $j$, the model \eqref{eq:fullmodel} with the multiplicative link is identifiable if $rank(A) = p$. The proof and the conditions for the case where $\alpha_j \geq 0$ are given in the Appendix \ref{subsec:Model-Identifiability-Proof}.

    \item \textbf{Invertible by} $\boldsymbol{\alpha}$\textbf{:} If $\alpha_{j}\neq0\,$ for all $j$,
    the multiplicative link function's inverse can be expressed as $g^{-1}_{jk}\left(\Theta,\alpha\right)=\frac{\Theta_{jk}}{\alpha_{j}\alpha_{k}}\,for\,j\neq k$, and $1$ otherwise.

    \item \textbf{Continuously differentiable:} The multiplicative link function is differentiable
    for any $\alpha_{j}$:
    $\frac{\text{\ensuremath{\partial g_{jk}}}}{\partial\alpha_{k}}=\Theta_{ij}\alpha_{i}\,for\,k=j\neq i$, and $1$ otherwise.
    
\end{enumerate}
We can conclude that for $\alpha$ values separated from 0, and a sufficiently dense correlation matrix $\Theta$, the model is well formed. We find these conditions to hold well empirically for RS-fMRI datasets. 

\textbf{Remark}\\
The model can be seen as an adaptation of two-way models for table data to correlation matrices. To see this, consider the following two-way model for two groups of $p\times p$ data tables, where the effect of the case group is composed of separate additive or multiplicative row effects $\tau$ and column effects $\nu$: $E_{i\in \mathcal{H}}[Y^i_{jk}] = \mu^H_{jk}$, and $E_{i\in \mathcal{D}}[Y^i_{jk}] = \mu^H_{jk} \cdot \tau_j \cdot  \nu_k $. When we adapt this to correlation matrices, the symmetry means that row and column effects are the same, resulting in our model above.

\section{Model Estimation}
\label{section:Model-Estimation}

Our target of estimation is the parameter vector of differences $\alpha$, whereas we treat $\Theta$ as a nuisance parameter.
We assume that each (vectorized) correlation matrix is approximately Gaussian distributed (see \cite{hlinka2011functional} and \cite{adrian2013ricean} for a discussion on the validity of this approximation in RS-fMRI).
The sampling covariance of correlation matrices depends on their mean through operator $C(\cdot)$, as developed in \eqref{eq:corr-clt}. One approach for estimation is to apply the mean model \eqref{eq:meanModel} for the covariance estimation, using  $Cov(\vec{R_i})$ with $C_\Theta = C(\Theta)$ for 
the control examples and $C_g = C(g(\Theta,\alpha))$ for the cases.
This would result in a full log-likelihood function:
%\begin{align}
$    \label{eq:full-log-lik}
    \ell \propto
    \sum_{i\in\mathcal{H}}\ln\phi_{m}\left(\vec{R_i},\vec{\Theta},C_\Theta\right)
    +\sum_{i\in\mathcal{D}}\ln\phi_{m}\left(\vec{R_i},\vec{g}\left(\Theta,\alpha\right),C_g\right). $
%\end{align}
However, tying the covariance term to the free parameters of the model increases the burden on the optimization procedure,
as updating the covariance of the correlation matrices is hard.
Furthermore, this formulation is less robust to deviations from the model's assumptions regarding the link function $g$ and to the choice of the optimization algorithm. 

\begin{figure}[bh!]
\begin{center}
    \includegraphics[scale=0.3]{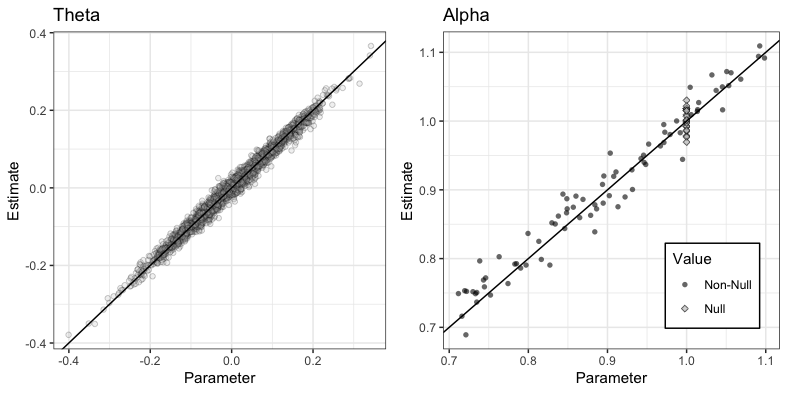}
    \includegraphics[scale=0.3]{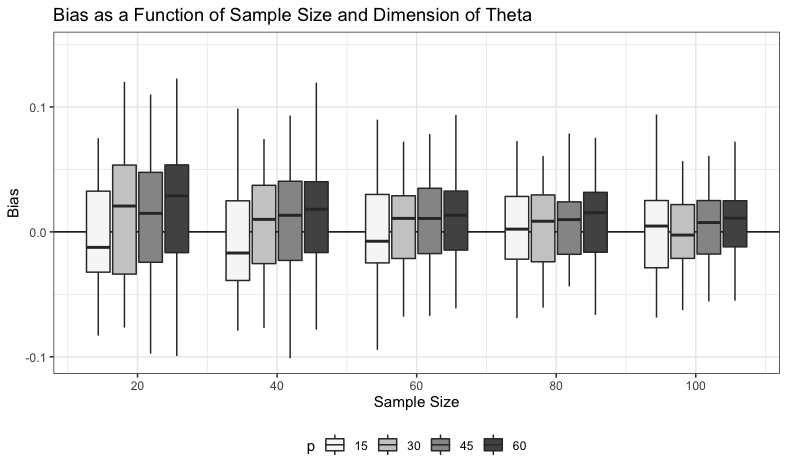}
    \caption{\small
    Estimation results. Top: Comparison of estimates and parameters in a single simulation.
    Bottom: Boxplot of the bias $\hat{\alpha} - \alpha$ as a function of sample size $n$ (x-axis) and dimension $p$ (shade).
    Each boxplot consists of $p \cdot B$ values of $\alpha$ where $B=3$ simulation repetitions.}
    \label{fig:bias}
\end{center}
\end{figure}

We therefore turn to a general least squares formulation where the covariance matrix does not change with $\alpha$ nor $\Theta$. 
Instead, we define $\bar{C}_{d}=n_{d}^{-1}\sum_{i\in\mathcal{D}}C\left(R_{i}\right)$,
and similarly $\bar{C}_{h}=n_{h}^{-1}\sum_{i\in\mathcal{H}}C\left(R_{i}\right)$,
and use these as weighting matrices in the quadratic loss function
$\mathcal{S}\left(\Theta,\alpha\right) 
=\mathcal{S}^{h}\left(\Theta\right)
+\mathcal{S}^{d}\left(\Theta,\alpha\right)$, where
\begin{align*}
    \mathcal{S}^{h}\left(\Theta\right) & =\sum_{i\in\mathcal{H}}\left(\vec{R_i}-\vec{\Theta}\right)^{t}\bar{C}_{h}^{-1}\left(\vec{R_i}-\vec{\Theta}\right),\\
    \mathcal{S}^{d}\left(\Theta,\alpha\right) & =\sum_{i\in\mathcal{D}}\left(\vec{R_i}-\vec{g}\left(\Theta,\alpha\right)\right)^{t}\bar{C}_{d}^{-1}\left(\vec{R_i}-\vec{g}\left(\Theta,\alpha\right)\right).
\end{align*}
Our estimators are defined as $\hat{\Theta},\hat{\alpha}=\arg\min_{\Theta,\alpha}\,\mathcal{S}$.
This optimization does not depend on the determinant of $C_g$ and therefore yields more stable results than the model in \eqref{eq:full-log-lik}.

For intermediate dimension $p$, optimizing the loss function $\mathcal{S}$ on $\Theta$ and $\alpha$ jointly is still computationally expensive due to the large model space. 
We optimize the parameters in an alternating manner:
at each step we optimize on $\alpha$ by gradient descent, and reestimate $\Theta$ by inverting the method-of-moments estimator. 
Formally, let $\hat{\Theta}_{k-1}$ be our estimate of $\Theta$ at the $k-1$ step, then the updates for $\hat{\alpha}_k,\hat{\Theta}_k$ at step $k$ are given by:
\begin{align}
    \hat{\alpha}_{k} & =\underset{\alpha}{\arg\min}\,\mathcal{S}^{d}\left(\hat{\Theta}_{k-1},\alpha\right), 
    \label{eq:alpha_k}
    \\
    \hat{\Theta}_{k} & =\left(n_{h}+n_{d}\right)^{-1}\left(\sum_{i\in\mathcal{H}}R_{i}+\sum_{i\in\mathcal{D}}g^{-1}\left(R_{i},\hat{\alpha}_{k}\right)\right).
    \label{eq:theta_k}
\end{align}
As initial values for the optimization, we set $\hat{\alpha}_0$ in Equation \ref{eq:alpha_k} to its null-value indicating $\Lambda^H = \Lambda^D$. 
For the multiplicative link function, this is $\hat{\alpha}_0 = \mathbf{1}$. 
% Then we update $\hat{\Theta}$ according to equation \ref{eq:theta_k}:
% $\hat{\Theta}_0 = \left(n_{h}+n_{d}\right)^{-1}\left(\sum_{i\in\mathcal{H}}R_{i}+\sum_{i\in\mathcal{D}}R_{i}\right)$, 
% the average correlation matrix calculated on both samples.
%Note that $T_{eff}$ does not  affect the loss if the matrices are based on the same number of time-points and similar temporal correlations. 

\subsection{Model Inference}
\label{subsec:Model-Inference}

% Inference on $\hat{\alpha}$ can be carried on using two methodologies:
% either a Generalized Estimation Equations framework \citep{liang1986longitudinal}, or a two-sample jacknife procedure.
% Relevant for both procedures, 
The estimation of $\alpha$ is viewed as a solution of a set of estimation equations: 
for a non-boundary minima of $\mathcal{S}$ by $\alpha$, the estimation error will be approximately Gaussian: \linebreak
$\left(n_{h}^{-1}+n_{d}^{-1}\right)^{-\nicefrac{1}{2}}\cdot\left(\hat{\alpha}-\alpha\right)\overset{\cdot}{\sim}\mathcal{N}_{p}\left(\mathbf{0},V\left(\hat{\alpha}\right)\right)$, where $V\left(\hat{\alpha}\right)$ is the covariance matrix of our estimates. We propose two methods for estimating $V\left(\hat{\alpha}\right)$:
the sandwich estimator based on the Generalized Estimation Equations (GEE) framework \citep{liang1986longitudinal}; and a two-sample Jackknife procedure that requires less assumptions but larger sample sizes and more computation.

\subsubsection{GEE Estimate of $V\left(\hat{\mathbf{\alpha}}\right)$}

\begin{figure}[!bh]
\begin{center}
    \includegraphics[scale=0.6]{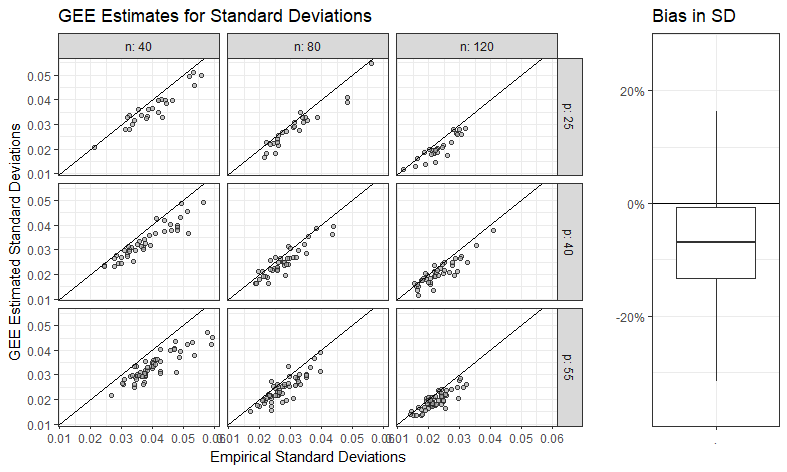}
    \caption{ \small
    SD estimation with GEE. Left: Estimated standard deviation ($SD\left(\hat{\alpha}\right)$, via GEE) versus true values (via empirical parametric bootstrap) for various $n, p$ values.
    Simulations were conducted under the weak null and with positive auto-correlation.
    Right: Boxplot showing the percentage bias  $\sqrt{ \hat{V}\left(\hat{\alpha}_j\right)\slash Var\left(\hat{\alpha}_{j}\right)} - 1$ from all simulations.}
    \label{fig:gee-var-consistency-1}
\end{center}
\end{figure}

$V\left(\hat{\alpha}\right)$ can be estimated with a sandwich estimator, following \cite{liang1986longitudinal}:
$\hat{V}_{gee}\left(\hat{\alpha}\right)=I^{\left(0\right)-1}I^{\left(1\right)}I^{\left(0\right)-1}$
where $I^{\left(0\right)}=I_{h}^{\left(0\right)}+I_{d}^{\left(0\right)}$
and $I^{\left(1\right)}=I_{h}^{\left(1\right)}+I_{d}^{\left(1\right)}$. \newline
These, in turn, are defined as: 
\begin{align}
    I_{d}^{\left(0\right)} = n_{d}\cdot\mathbf{J}_{\alpha}g^{t}\cdot\bar{C}_{d}^{-1}\cdot\mathbf{J}_{\alpha}g\,|_{\alpha=\hat{\alpha}},\qquad &I_{d}^{\left(1\right)}  =n_{d}\cdot\mathbf{J}_{\alpha}g^{t}\cdot\bar{C}_{d}^{-1}\cdot \Upsilon_{d}\cdot\bar{C}_{d}^{-1}\cdot\mathbf{J}_{\alpha}g\,|_{\alpha=\hat{\alpha}}
    \label{eq:gee-var-1}\\
        I_{h}^{\left(0\right)}
    =n_{h}\cdot\mathbf{J}_{\alpha}\hat{\Theta}^{t}\cdot\bar{C}_{h}^{-1}\cdot\mathbf{J}_{\alpha}\hat{\Theta}\,|_{\alpha=\hat{\alpha}} \qquad &I_{h}^{\left(1\right)}
    =n_{h}\cdot\mathbf{J}_{\alpha}\hat{\Theta}^{t}\cdot\bar{C}_{h}^{-1}\cdot\Upsilon_{h}\cdot\bar{C}_{h}^{-1}\cdot\mathbf{J}_{\alpha}\hat{\Theta}\,|_{\alpha=\hat{\alpha}}
    \label{eq:gee-var-2}
%    I_{d}^{\left(1\right)} & =n_{d}\cdot\mathbf{J}_{\alpha}g^{t}\cdot\bar{C}_{d}^{-1}\cdot \Upsilon_{d}\cdot\bar{C}_{d}^{-1}\cdot\mathbf{J}_{\alpha}g\,|_{\alpha=\hat{\alpha}}
%    \label{eq:gee-var-2}\\
%    I_{h}^{\left(1\right)}&
%    =n_{h}\cdot\mathbf{J}_{\alpha}\hat{\Theta}^{t}\cdot\bar{C}_{h}^{-1}\cdot\Upsilon_{h}\cdot\bar{C}_{h}^{-1}\cdot\mathbf{J}_{\alpha}\hat{\Theta}\,|_{\alpha=\hat{\alpha}}
\end{align}
where $\mathbf{J}_{\alpha}g$ is the Jacobian matrix of $\vec{g}\left(\hat{\Theta},\alpha\right)$ by $\alpha$
($\left[\mathbf{J}_{\alpha}g\right]_{ij}=\frac{\partial\vec{g}_{i}\left(\hat{\Theta},\alpha\right)}{\partial\alpha_{j}}$),
$\mathbf{J}_{\alpha}\hat{\Theta}$ is the Jacobian matrix of $\hat{\Theta}$ by $\alpha$,
and $\Upsilon_{d}$ is an empirical estimate of the covariance of $\vec{R_i}$ (for $i\in\mathcal{D}$):
\[ 
    \Upsilon_{d}=\left(n_{d}-1\right)^{-1}\sum_{i\in\mathcal{D}}\left(\vec{R_{i}}-\vec{g}\left(\hat{\Theta},\hat{\alpha}\right)\right)\left(\vec{R_{i}}-\vec{g}\left(\hat{\Theta},\hat{\alpha}\right)\right)^{t}.
\]
$\Upsilon_h$ is defined similarly. Note that in theory, $\bar{C}_{d}$ should be divided by $T_{eff}$ in equations (\ref{eq:gee-var-1}),(\ref{eq:gee-var-2});
in practice $T_{eff}$ cancels out in the multiplication of $V\left(\hat{\alpha}\right)$, and so it is omitted.
% This is also the first time the estimate of $n_{eff}$ is needed:
% $C_{d}$ is the empirical covariance matrix of $\vec{R_i}$ based on $n_{d}$ independent subjects.

% We estimate $n_{eff}$ using Efron's RMS coefficient estimate
% $\hat{\psi}_{d}=\sqrt{m^{-1}\sum_{i<j}\bar{R}_{d,ij}^{2}}$ and
% $n_{eff}=\frac{n_{d}}{\left(n_{d}-1\right)\hat{\psi}_{d}^{2}+1}$;
% Finally, we arrive with the following theorem,
% which in turn allows us to conduct inference on the estimates $\hat{\alpha}$ with Z-tests.

\subsubsection{Jackknife Estimate of $V\left(\hat{\alpha}\right)$}

Alternatively, we propose a non-parametric method of estimating $V\left(\hat{\alpha}\right)$
with a Jackknife procedure \citep{efron1981jackknife}.
This framework can be used in cases where the GEE framework underestimates the variance, for example when the proportion of the control group is small (Figure \ref{fig:percent-sick}). Here, each subject is omitted in turn from the sample, and $\alpha$ is re-estimated.
Call $\hat{\alpha}_{(-i)}$ the estimate calculated when omitting subject $i$,
and call $\hat{\alpha}_{d}=n_d^{-1}\sum_{i\in\mathcal{D}}\hat{\alpha}_{(-i)}$
($\hat{\alpha}_{h}$ defined similarly).
Then the Jackknife estimate is defined as
$    \hat{\alpha}_{jack}=
    \frac{n_{h}\cdot\hat{\alpha}_{h}+n_{d}\cdot\hat{\alpha}_{d}}{n_{h}+n_{d}}.  $
The variance of the Jackknife estimate is estimated with $\hat{V}_{jack}\left(\hat{\alpha}\right)=\nu_d + \nu_h$ where $    \nu_{d}=\frac{n_d - 1}{n_d}
    \sum_{i\in\mathcal{D}}
    \left(\hat{\alpha}_{(-i)}-\hat{\alpha}_{d}\right)\cdot
    \left(\hat{\alpha}_{(-i)}-\hat{\alpha}_{d}\right)^t $ and $\nu_{h}$ is defined similarly.
% In practice, the optimization of $\alpha$ in each Jackknife iteration can use warm start for faster convergence.
In practice, we use a warm start for $\hat{\alpha}_{(-i)}$ for faster convergence.
\begin{figure}[!th]
\begin{center}
    \includegraphics[scale=0.5]{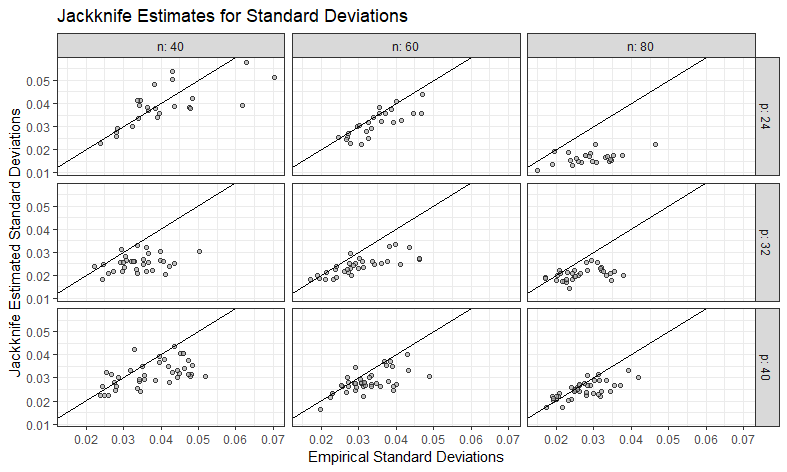}
    \caption{ \small
    SD estimation with Jackknife. Estimated standard deviation versus true values for various $n, p$ values.
    Simulations were conducted under the weak null and with positive auto-correlation.}
    % Jackknife simulations. On the X axis -
    % empirical standard deviations calculated over 100 simulations.
    % based on the first simulation.
    \label{fig:jack-var}
\end{center}
\end{figure}
% , but once the estimators have converged at step $K$,
% set the estimators' values at stage $K-3$ as starting points (if $K<=3$, take $K-1$).
% Then loop over subjects, omitting one subject at a time, and continue until re-convergence.
% This loop is done twice;
% once along diagnosed subjects and once along control subjects.
\subsubsection{Inference on $\alpha$ and Accounting for Multiplicity}
After estimating $V\left(\hat{\alpha}\right)$, we can assume that the estimators derived in Section \ref{section:Model-Estimation}  have a limiting distribution of $
    \hat{V}\left(\hat{\alpha}\right)^{-\nicefrac{1}{2}}\left(\hat{\alpha}-\alpha\right) 
    \overset{\cdot}{\sim}\mathcal{N}_{p}\left(0,I_{p}\right)$,
% and in particular $
 %   \hat{V}\left(\hat{\alpha}\right)^{-\nicefrac{1}{2}}_{jj}\left(\hat{\alpha}_{j}-\alpha_{j}\right)
%    \overset{\cdot}{\sim}\mathcal{N}\left(0,1\right).$
and calculate the Z-values, p-values, and confidence intervals accordingly.
We correct the p-values and confidence intervals to keep the FDR and FCR accordingly using the BH procedure \citep{benjamini1995controlling, benjamini2005false}.

\section{Simulations}
\label{section:simulations}

In this section, we validate and evaluate the model on simulated data.
The data is simulated in accordance to the model's assumptions with the multiplicative link function.
In each experiment, we randomly generate a dense spatial covariance matrix $\Theta$.
For each subject, we generate $T$ multivariate Gaussian measurements with $\Theta$ covariance for the control subjects
and $g\left(\Theta,\alpha\right)$ for patient subjects.
From these measurements, we compute the sample correlation matrix for each subject.
The measurements are simulated either from a multivariate $ARMA\left(1,1\right)$ process with both AR and MA parameters equal to .5 to approximate the temporal correlation expected in RS-fMRI scans  \citep{luo2020improved, locascio1997time},
or with temporal-independence as a benchmark.
More details about the simulations are in the appendix (Section \ref{subsec:algorithms}).

% \begin{figure}[!bh]
%     \includegraphics[scale=0.5]{}
%     \caption{Consistency of $\hat{V}\left(\hat{\,\alpha\right)alpha}\right)$ over $n,p$D
% .    Under Strong Null and No Time Dependency}
%     \label{fig:gee-var-consistency-2}
% \end{figure}

\subsubsection{Validation of the Inferential Methods}
\begin{figure}[!bh]
\begin{center}
    \includegraphics[scale=0.7]{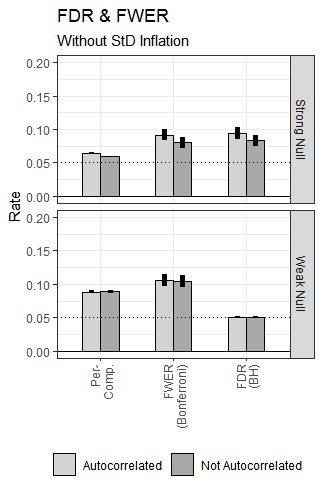}
    \includegraphics[scale=0.7]{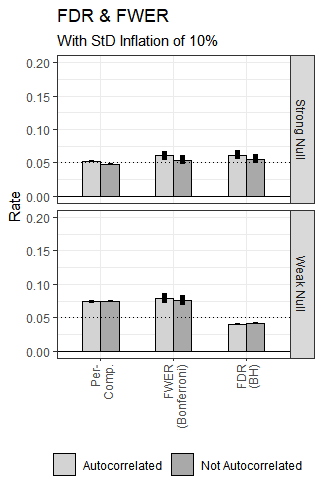}
    \caption{\small
    % \todo[inline]{how come FPR has low variability...;
    % rename -> per-comparison error rate}
    Error rates for testing partial hypotheses $\left\{H_{\left(0\right),j}:\,\alpha_{j}=1\right\}$ under the weak and strong null.
    We plot the per-comparison error rate, family-wise error rate (FWER) after Bonferroni correction, and false discovery rate (FDR) after BH correction.
    p-values are based on $\hat{V}_{gee}$ estimates of variance,
    with and without the standard deviation inflation of 10\%.
    Bars indicate a $\pm 1\,\text{SD}$ range.
    Data were simulated with $n_h=n_d=50$ and $p=32$.
    % In the FPR column, we calculate the proportion of null $\alpha$'s declared significant over all null $\alpha$'s in each sample, then average over the groups.
    % In the FWER column, we calculate whether at least one null $\alpha$ was declared significant after Bonferroni's correction within each sample, then calculate the FWER over the groups.
    % In the FDR column, we calculate the proportion of null $\alpha$'s declared significant after BH correction out of all $\alpha$'s declared significant within each sample, then calculate the FDR over the groups.
    }
    \label{fig:fwer-fdr}
\end{center}
\end{figure}

In Figure (\ref{fig:bias}), we compare the estimates to the parameters of the model and inspect the bias of $\alpha$'s estimates.
In the top panel, we set $n=100$ and $p=50$, with 10 of the $\alpha_j$-s equal to 1 and the rest vary between $\left[0.7, 1.1\right]$.
In the bottom panel, we allow the sample size and the dimension $p$ to vary with 3 replications.
The results suggest that both the bias and spread of $\hat{\alpha}$ decrease as the sample size increases.

Next, we compare the GEE framework estimators of variance to the observed variance.
We use a parametric bootstrap procedure:
for each randomly generated set of $\Theta,\alpha$ we simulated 100 datasets and estimated the empirical variance of $\hat{\alpha}$ across datasets.
We compared this result to the average GEE variance estimate over all datasets.
In Figure (\ref{fig:gee-var-consistency-1}) %(\ref{fig:gee-var-consistency-2}),
we hold the proportion of the patient group size $n_d \slash \left(n_d + n_h \right)$ constant at .5,
and allow $n$ and $p$ to vary. The $\alpha$'s are not fixed at 1, and a positive temporal correlation is introduced. 
In Figure (\ref{fig:jack-var}) we repeat the experiment but this time estimate the variation using the Jackknife framework.
Both figures suggest that the estimated standard deviations characterize well the true deviation of the $\hat{\alpha}s$ values.  
In Figure (\ref{fig:percent-sick}), we varied the proportion of the patient group size $n_d \slash \left(n_d + n_h \right)$ and examined its effect on the bias of the GEE variance estimator.
Each boxplot represents a different balance.
We find that the GEE variance estimators are sensitive to a severe imbalance between group sizes.

\begin{figure}[h]
\begin{center}
    \includegraphics[scale=0.5]{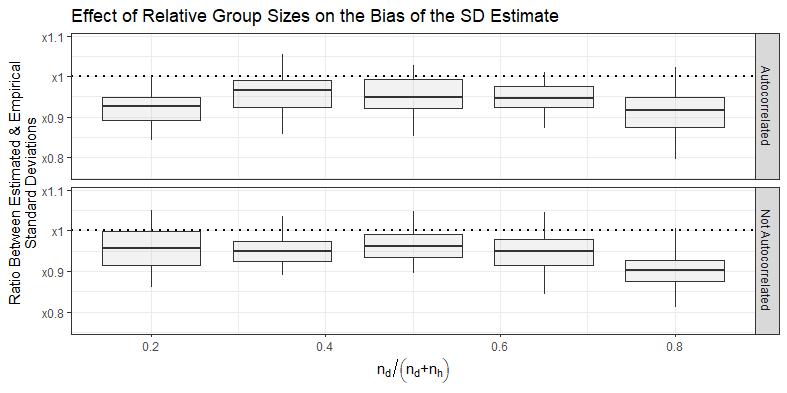}
    \caption{\small Effect of the relative size of the patient group ($\frac{n_{d}}{n_{h}+n_{d}}$) on the SD estimator (GEE).
    Strong imbalance between the sizes of the groups increases the bias of $\sqrt{\hat{V}_{gee}}$. In all runs, $n=50$ and $p=25$}
    \label{fig:percent-sick}
    \end{center}
    \end{figure}

\begin{figure}[!bh]
    \begin{center}
    \includegraphics[scale=0.32]{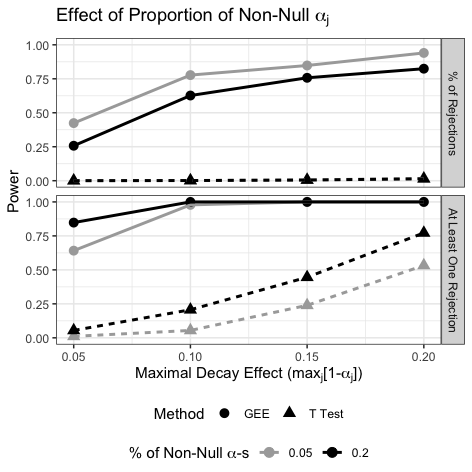}
    \includegraphics[scale=0.32]{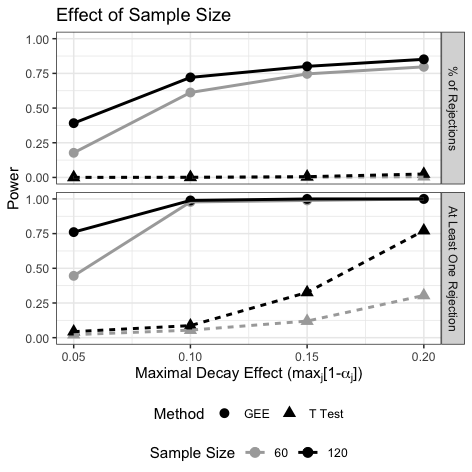}
    \caption{\small
    Comparison of power between our method ($\CIRCLE$, solid line) and the mass-univariate approach ($\blacktriangle$, dashed line).
    On the x-axis, the effect size ($1 - \max_j\alpha_j$).
    On the top panel, power is defined as the probability to reject the global null.
    On the bottom, it is defined as the proportion of rejected non-null $\alpha$-s.
    On the left, the sample size is held constant ($n=100$) and the \% of non-null $\alpha_{j}$-s vary by color.
    On the right, the percent of non-null $\alpha_{j}$-s is held constant (10\%) and the sample size varies  by color. }
    \label{fig:power}
\end{center}
\end{figure}

Importantly, Figures (\ref{fig:gee-var-consistency-1}) and (\ref{fig:percent-sick}) indicate that the GEE variance estimators slightly underestimate the variance.
To account for this issue, we inflate standard deviations produced by this framework by 10\% as a heuristic.
We conducted this inflation in the following simulations and in the analysis of the TGA data in Section \ref{section:realdata} as well.

Figure (\ref{fig:fwer-fdr}) examines the control of the FWER (with Bonferroni's correction) %  \citep{holm1979simple}
and the control of the FDR (using BH correction) under four situations:
under the strong null (all null hypotheses are true)
and the weak null (some alternative hypotheses are true),
and when temporal correlations are or are not present.
% Our proposed method keeps the FDR and FWER under control in all cases,
% while the general type-I error (\% of rejected nulls out of all nulls) is kept under .05.
The results are shown without (left) and with inflating the standard deviations.
% In cases where there is a large imbalance between the control and diagnosed group, we suggest using the Jackknife estimates.

\subsubsection{Comparing Power of Different Approaches} 
Figure (\ref{fig:power}) compares the statistical power of our method to the mass-univariate approach. For each correlation coefficient $\left(k,l\right)$, we conduct a T-test comparing the means of
$\left\{ \zeta\left(R_{i,kl}\right)\right\} _{i\in\mathcal{D}}$ to
$\left\{ \zeta\left(R_{i,kl}\right)\right\} _{i\in\mathcal{H}}$ 
where $\zeta$ is Fisher's Z-Transformation \citep{fisher1921probable}.
We correct the p-values using the BH procedure and compare the proportion of rejections on voxels having $\left\{ k,l:\Lambda^H_{kl}\neq\Lambda^D_{kl}\right\} $,
to the proportion of columns rejected having $\alpha\neq1$.
We find that our method detects more variables with non-null effects compared to the mass-univariate approach. 
When the number of columns affected increases, the proportion of rejections in our model is reduced (top left figure).
We hypothesize that as more columns are affected, it is harder for the model to identify the columns.
% Under the assumption of column-wise effect - our model has more statistical power. 

\begin{figure}[!h]
\begin{center}
    \includegraphics[scale=0.6]{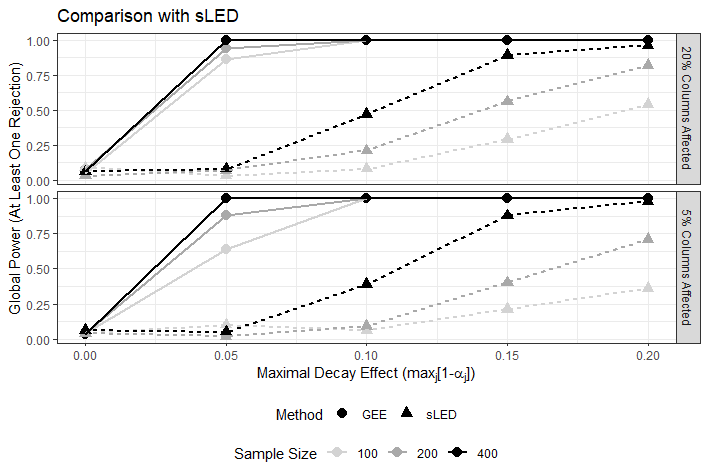}
    \caption{\small
    Comparison of power between our method ($\CIRCLE$, solid line) and the sLED method of \cite{ZhuTesting} ($\blacktriangle$, dashed line).
    On the x-axis, the effect size ($1 - \max_j\alpha_j$).
    On the y-axis, the probability to reject the global null.
    Sample size varies by color.
    }
    \label{fig:powerb}
\end{center}
\end{figure}

In Figure (\ref{fig:powerb}) we compare our method to the sparse leading eigenvalue (sLED) method of \cite{ZhuTesting}.
sLED formulates a permutation test for testing whether the correlation matrices for two groups of independent vectors are identical.
For the test statistic, it finds the leading sparse eigenvalue of the difference between correlations.
We modified their permutation test to match the structure of multi-subject data, permuting the assignment of the correlation matrices into the groups.
Although their test benefits from the sparsity of the leading eigenvector, it can only infer for the full vector, not for differences in individual coordinates of the vector. 
Therefore, we define the power for our method as the probability of rejecting the null for at least one variable (after multiplicity corrections), and compare this to the probability of rejecting the global null under the sLED method. We find a substantial increase in the power of our method to reject the global null compared to sLED for multiple effect sizes and sample sizes. 

% In figures \ref{fig:robust1} %, \ref{fig:robust2}
% In Figure \ref{fig:robust1} we simulate data of 50 subjects,
% with $p=24$ and ARMA parameters of .5, according to the multiplicative link function.
% The $\alpha$ vector is comprised of 19 non-null $\alpha$s (79\%) ranging between $\left[0.6, 0.95\right]$.
% We then estimate the model using the Additive-Quotent link function, $g_{ij}\left(\Theta,\alpha\right)=\frac{\Theta_{ij}}{1+\alpha_{i}+\alpha_{j}}$
% and compare the (misspecified) estimates versus the real parameters.
% We also exhibit the distribution of the Z-scores in null and non null cases.

In Supplementary Figure (\ref{fig:time}) we examine the computation time as a function of $p$; we find calculation time scales as $m^{2.3}$,
the time complexity of the inversion of $\bar{C}_{d}$ (an $m\times m$ matrix).

% In Figure \ref{fig:robust2} we inspect the model when
% the underlying data doesn't necessarily obey a specified link function.
% We load the data from the TGA research from \citep{peer2014reversible},
% and randomly select a subset of 24 columns, 
% resulting with 29 matrices of dimension $24\times24$.
% Define $\bar{R}_{h}=n_{h}^{-1}\cdot\sum_{i\in\mathcal{H}}R_{i}$
% to be the sample-average matrix of control subjects ($\bar{R}_{d}$ is defined similarly on diagnosed subjects).
% We initially set $\Lambda^H,\Lambda^D\leftarrow\bar{R}_{h}$
% and then select 10 columns at random to be replaced with values from $\bar{R}_{d}$.
% For example, if column $j'$ was selected to be replaced,
% we set $\Lambda^D_{\cdot j'}\leftarrow\bar{R}_{d,\cdot j'}$.
% Since the value of $\Lambda^D$ resulting with this procedure isn't guaranteed to be positive-definite,
% we repeat the procedure until the following conditions apply:
% \begin{enumerate}
%     \item $\Lambda^D$ is positive definite
%     \item $\underset{i,j}{\min}\left\{\Lambda^D_{ij} - \Lambda^H_{ij}\right\}<-.17$
% \end{enumerate}
% Condition 2 was added to ensure there is a meaningful effect between $\Lambda^D, \Lambda^H$.
% We then proceed to simulate 50 new correlation matrices (25 in each group)
% with $\Lambda^D, \Lambda^H$ as expected values and AR \& MA parameters of 0.5.
% We estimate the model using the multiplicative link function, 
% and plot the distribution of Z-scores.

\section{Application on Data From Patients with TGA}
\label{section:realdata}

We refer to the RS-fMRI data recorded by \cite{peer2014reversible} from patients with acute transient global amnesia (TGA) to demonstrate our proposed model.
In the original paper, the data was analyzed using the mass-univariate approach.
% The phenomenon is described in the introduction in more detail.
% TGA is a mysterious clinical condition characterized by complete anterograde memory loss (inability to encode and remember new information).
% Importantly, this condition usually resolves after several hours and does not result in noticeable structural brain disturbances,
% and therefore its origins and the underlying neurological impairment remain unclear.
In order to preserve statistical power,
the authors focused their investigation to suspected brain regions and
they grouped together activity across known sets of brain regions to limit the number of comparisons.
While these practices could detect statistically significant differences between the groups,
they limited the ability to detect differences in brain regions that the researchers had no prior hypothesis on.
In contrast, our model finds statistically significant decays in the correlations of several hippocampal and adjacent temporal lobe regions without using prior information to limit the search.

\begin{figure}[!h]
\begin{center}
    \includegraphics[scale=0.4]{tga_analysis/control_explanatory}
    \includegraphics[scale=0.4]{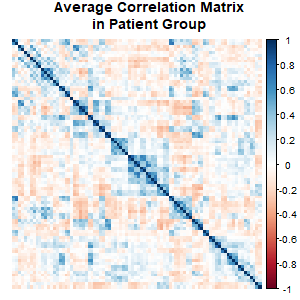}
    \caption{\small The average correlation matrices of control subjects (left) and TGA patients (right).}
    \label{fig:tga-exploratory}
    \end{center}
\begin{center}
    \includegraphics[scale=0.4]{tga_analysis/difference_explanatory}
    \includegraphics[scale=0.4]{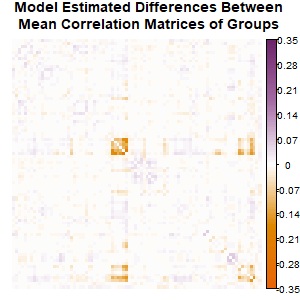}
    \caption{\small
    Left: Observed differences between TGA patients and control subjects ($\bar{R}_{d}-\bar{R}_{h}$). Right: Model estimated differences    ($\hat{\Theta}_{ij}\hat{\alpha}_{i}\hat{\alpha}_{j}-\hat{\Theta}_{ij}=\hat{\Theta}_{ij}\left(\hat{\alpha}_{i}\hat{\alpha}_{j}-1\right)$)}
    \label{fig:tga-difference}
\end{center}
\end{figure}
The data consists of $n_d=12$ patients (aged $62.7\pm7.4$; 5 male) that met the standard clinical criteria for a diagnosis of TGA,
matched by $n_h=17$ volunteers (aged $62.1\pm6.9$; 8 male) that serve as the control group.
Patients were scanned with functional MRI during the few hours of the acute amnestic state. The rarity of TGA and the time constraints explain the small sample sizes in this study. We refer the reader to the original paper for details on the fMRI processing and extraction of activity patterns in each region.
The whole brain network was defined according to the Automatic Anatomical Labeling (AAL) atlas,
which defines 45 brain regions in each cerebral hemisphere \citep{tzourio2002automated}.
4 regions were omitted from the analysis
%\footnote{Regions 21 (\emph{OC}), 28 (\emph{GR}), 75 (\emph{Pons}) and 76} 
as some subjects had missing values in the corresponding columns,
resulting in correlation matrices of dimension $p=86$.
In Figure (\ref{fig:tga-exploratory}) we display the average correlation matrix of each group.

When analyzing the data using the multiplicative link, we found an upwards bias in the estimates:
$\hat{\alpha}_j$ tends to be consistently higher than 1,
possibly due to the preprocessing.
To remove this bias we conduct a location correction:
$
    \tilde{\alpha} = \hat{\alpha} - \text{M}\left(\hat{\alpha}\right) + 1,
$
with $\text{M}\left(\cdot\right)$ representing the median operator and 1 is added since it is the null value of the multiplicative link function.
We estimated $V\left(\hat{\alpha}\right)$ using the GEE method, and inflated the standard deviations by 10\%.
The p-values and intervals were corrected to keep the FDR and FCR respectively \citep{ benjamini2005false}.

\begin{table}[bth!]
    \caption{Regions with statistically significant decay (FCR \& FDR corrected)}
    \label{table:decay-results}
    \centering
    \begin{tabular}{|c|c|c|c|c|}
        \hline 
        \thead{\textbf{AAL} \textbf{Index}} & \thead{\textbf{Region}} & \thead{$\boldsymbol{\tilde{\alpha}_j}$ \quad
         (CI, FCR Adj.)} & \thead{\textbf{P-value} FDR-Adj.}
        \tabularnewline
        \hline 
        \makecell{37} & \makecell{Hippocampus (L)}  & \makecell{0.907 \quad \small{(0.84 - 0.97)}} &  \makecell{.005}
        \tabularnewline
        \hline 
        \makecell{41} & \makecell{Amygdala (L) }  & \makecell{0.848  \quad \small{(0.76 - 0.94)}} &  \makecell{.002}
        \tabularnewline
        \hline 
        \makecell{42} & \makecell{Amygdala (R) }  & \makecell{0.784  \quad \small{(0.69 - 0.88)}} &  \makecell{$<$.0001}
        \tabularnewline
        \hline 
        \makecell{83} & \makecell{Temporal Pole (L) } & \makecell{0.913 \quad \small{(0.85 - 0.98)}} & \makecell{.0087}
        \tabularnewline
        \hline 

        \makecell{87} & \makecell{Temporal Pole (L)}  & \makecell{0.842 \quad \small{(0.74 - 0.97)}} & \makecell{.005}
        \tabularnewline
        \hline 
    \end{tabular}
\end{table}
In Figure (\ref{fig:tga-difference}) we show the observed difference ($\bar{R}_{d}-\bar{R}_{h}$) and the estimated difference through the link function.
Our methodology underestimates the differences in comparison to the mass univariate approach, due to the column-wise structure assumed in the model.
When inferring $\alpha$, we found statistically significant decreases in 5 regions corresponding to the Hippocampus (left), Amygdala (left, right), and the Temporal Pole (T1P, T2P), as shown in Table \ref{table:decay-results}.
Full results are plotted in Figure (\ref{fig:tga-model-diff}) and presented in Section \ref{subsec:full-tga-results} in the appendix.

In contrast, none of the regions were rejected in the mass-univariate analyses after a BH correction at level .05
(minimal adjusted p-value at .37). For this analysis, we conducted a Welch T-test on each of the off-diagonal elements (after a Fisher transformation), for a total of $m=3,655$ tests.
Therefore, without constraining the search-space, no significant difference between the groups would be detected (Figure \ref{fig:example}). In summary, the application of our method on the TGA data identified significant decays in 5 brain regions,
all considered related to the processing of short and long-term memory.
By introducing structure, our model had fewer parameters and a reduced number of hypotheses,
and was able to detect perturbed regions.

\begin{figure}[h!]
    \begin{center}
        \includegraphics[scale=0.65]{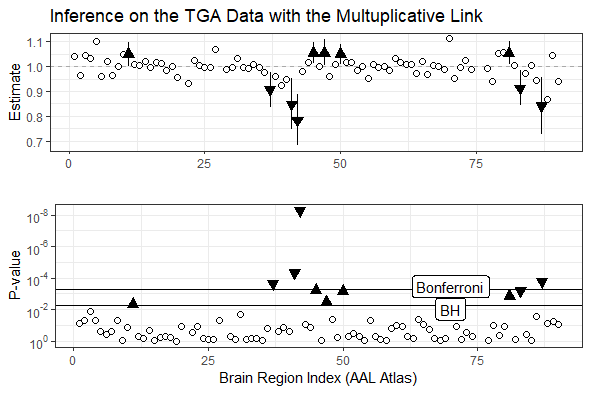}
        % \includegraphics[scale=0.5]{}
        % \caption{The marginal effect of $\alpha$: In the left figure, we show $\left[-\frac{\alpha_{i}+\alpha_{j}}{1+\alpha_{i}+\alpha_{j}}\right]_{ij}$
        % independently of $\Theta_{ij}$. In the right we show only columns with $p_{BH}<.05$, allowing to identify affected regions.}
        \caption{\small
        Top: Estimates of $\alpha$ with FCR-corrected confidence intervals
        ($\blacktriangledown$: Significant decay, $\blacktriangle$: significant increase, $\circ$: insignificant).
        Bottom: The $\alpha$'s p-values,
        in $\log _{10}$ scale with Bonferroni and BH thresholds.}
        \label{fig:tga-model-diff}
    \end{center}
\end{figure}

\section{Discussion}

In this paper, we propose a parsimonious model for detecting differences in single variables between correlation-matrix populations.
This allows us to control and reduce the number of parameters in the statistical model
and avoid the mass loss of power, which results from ad-hoc multiple-comparisons corrections. Our model requires a better description of the cross-dependencies between correlation coefficients, due to both within-subject variability as well as between-subject variability.
The parametric framework also shows a substantial gain of power compared to the regularized leading eigenvalue statistic, for which inference is derived using a permutation method. 
Our model provides estimates with useful variance approximations for intermediate-level statistics, as well as powerful statistical tests for partial hypotheses.
% In our working example, by testing $H_{\left(0\right),j}:\alpha_{j}=1$,
% we reduced the number of comparisons to $p=86$, allowing us to detect a statistically significant decay in the correlations involving the Hippocampus and Amygdala.

Investigating brain connectivity changes in disease states using RS-fMRI represents cases where multiple correlation matrices are compared between groups of subjects.
Transient global amnesia represents one such case,
where the stark clinical symptoms are not necessarily accompanied by visible brain changes,
yet they point to a single lesion in the CA1 part of the hippocampus.
Functional MRI can provide important clues as to the origin of the disorder.
Previous studies using fMRI (e.g. \cite{peer2014reversible}) constrained their analyses to the hippocampus and its connectivity,
thus potentially missing out on other brain regions involved.
Our method,
which allows unconstrained data-driven analysis of connectivity disturbances across the brain,
revealed that the hippocampus is indeed a core region involved in the disorder.
However, additional regions close to the hippocampus (amygdala and temporal pole) were also implicated,
indicating a more widespread disruption in the temporal lobe.
Specifically, the amygdala disturbance suggests that emotional memory might also be impaired in TGA, opening new avenues for clinical research.

A major challenge of the model is the computational cost of the procedure,
mainly the inversion of $\bar{C}_d$.
In order to allow analyses of larger matrices,
this gap must be solved, either with the use of sparse matrices,
using the pseudo inverse or introduction of parallelization. 

There are natural expansions and generalizations for the model.
First, throughout our paper we treated $\alpha$ as a vector in $\mathbb{R}^p$, thus limiting the number of hypothesis tests to the dimension of the correlation matrix in question.
However, $\alpha$ can be generalized to a matrix of size $p \times q$ where $q\geq1$. The selection of $q$ then holds a trade-off between the flexibility of the model and the number of hypotheses to test.
Setting a large $q$ eases the severity of the model assumptions on the data while setting a low $q$ reduces the number of hypotheses.
%To illustrate, by setting $q=1$, one has only $p$ comparisons, while setting $q=p\slash2$ will end up with $p^2 \slash 2 \approx m$ comparisons, which is equivalent to the mass-univariate approach.

Another possible expansion would be the implementation of the model on the partial correlation matrices \citep{geng2018joint, na2019estimating}.
In order to do so, the covariance structure $\tilde{C}_d$ must be researched and re-evaluated.
Lastly, the model should be adjusted for comparing subjects to themselves,
as many recent RS-fMRI studies feature intra-subject comparisons \citep{dinicola2020parallel, braga2020situating}.

In conclusion, we present here a method of discovering differences between groups of correlation matrices.
Our findings show that this new methodology can detect localized disturbances in the variables forming the correlation matrices.
We believe that this approach can be used to explore existing and new RS-fMRI datasets of different brain disorders,
and may be useful in a wide array of additional fields, including neuroscience, economy, and engineering.

\section{Software and data availability}
Our methods, as well as the TGA dataset, are available as an open-source R package \emph{github.com/itamarfaran/corrpops}
 (\citeyear{faran2021corrpops}). 

\section{Appendices}

The appendix include proofs for Section 
\ref{section:model}, additional figures, detailed results for the TGA data, and a description of the sampling algorithms in the simulation. 

\section*{Acknowledgments}

We thank Yuli Slavutsky, Lee Carlin, Omer Ronen, David Zucker, and Yoav Benjamini for their comments on early versions of this manuscript, and Lihua Lei for valuable discussions.

IF and YB were supported in part by the National Institutes
of Health (NIH) R01GM083084, and the German Israeli Foundation. MP was
supported by an Eshkol fellowship, an Eva, Luis and Sergio Lamas Fund award, a
Fulbright fellowship, and the Zuckerman STEM Leadership Program fellowship. The project was also partially supported by   the Center for Interdisciplinary Data Science
Research at the Hebrew University.

{\it Conflict of Interest}: None declared.

\bibliographystyle{apalike}
\bibliography{bib}

\appendix

\section{Proofs from Section 3}

\subsection{Conditions for Positive Semi-Definiteness}
\label{subsec:PSD-Proof}
\begin{proposition}
    Define $\lambda_{p}\geq0$ to be the minimal eigenvalue of $\Theta$, and $\alpha_{M},\alpha_{m}$ to be the maximal and minimal values of $\left\{ \left|\alpha_{1}\right|,...,\left|\alpha_{p}\right|\right\}$, respectively.
    For the multiplicative link function $g$:
    \begin{enumerate}
        \item For arbitrary $\alpha$,
        if $\lambda_{p} \cdot \alpha_m^2 \geq \alpha_{M}^{2}-1 $, then $g(\Theta, \alpha)$ is positive semi-definite.
%                \item For arbitrary $\alpha$,
%        if $\lambda_{p} \cdot \alpha_m\geq\left(\alpha_{M}^{2}-1\right)\slash\alpha_{m}^{2}$, then $g(\Theta, \alpha)$ is positive semi-definite.
        \item Specifically, if $\alpha_M \leq 1$,then $g\left(\Theta, \alpha\right)$ is positive semi-definite.
    \end{enumerate}
\end{proposition}
\begin{proof}
Let $\Theta\in\mathcal{C}_{p}$ a correlation matrix,
and $\alpha\in\mathbb{R}^{p}$. Consider the multiplicative link function $g$ in it's matrix form:
\[
    g\left(\Theta,\alpha\right)
    =\Theta\circ\alpha\alpha^{t}+I-D_{\alpha}^{2}
\]
with $D_{x}: \mathbb{R}^p \mapsto \mathbb{M}_p\left(\mathbb{R}\right)$ defined as 
\[
    \left[D_{x}\right]_{ij}=\begin{cases}
    x_{i} & i=j,\\
    0 & otherwise.
    \end{cases}.
\]
We wish to prove that $x^{t}g\left(\Theta,\alpha\right)x\geq0\,\forall\,x\in\mathbb{R}^{p}$.
Write the spectral decomposition of $\Theta$ as $\Theta=U\Lambda U^{t}$, with $\lambda_{1}\geq\lambda_{2}\geq...\geq\lambda_{p}\geq0$.
Also, recall that:
\begin{align}
    x^{t}\left(A\circ B\right)y
    &= \text{tr}\left(D_{x}AD_{y}B^{t}\right)
    \label{eq:hadamard-trace}
    \\
    \max_{j}a_{j}\cdot||x||^{2}
    &\geq x^{t}D_{a}x\geq\min_{j}a_{j}\cdot||x||^{2} \, \forall \, a
    \label{eq:norm-bound}
\end{align}
where \ref{eq:norm-bound} is a result of
\[
    x^{t}D_{a}x=\sum_{i}a_{i}x_{i}^{2}
    \geq\sum_{i}\min_{j}a_{j}\cdot x_{i}^{2}=\min_{j}a_{j}\cdot||x||^{2}.
\]
Similarly, $x^{t}D_{a}x\leq\max_{j}a_{j}\cdot||x||^{2}$.
We wish to find the condition on which $g\left(\Theta,\alpha\right)$
is positive semi-definite, $x^{t}g\left(\Theta,\alpha\right)x\geq0$:
\begin{align*}
    x^{t}g\left(\Theta,\alpha\right)x
    &= x^{t}\left(\Theta\circ\alpha\alpha^{t}+I-D_{\alpha}^{2}\right)x \\
    &= x^{t}\left(\Theta\circ\alpha\alpha^{t}\right)x+x^{t}x-x^{t}D_{\alpha}^{2}x.
\end{align*}
Now, using \ref{eq:hadamard-trace}, 
\begin{align*}
    x^{t}\left(\Theta\circ\alpha\alpha^{t}\right)x 
    &= \text{tr}\left(D_{x}\Theta D_{x}\alpha\alpha^{t}\right) \\
    &= \alpha^{t}D_{x}\Theta D_{x}\alpha \\
    &= \alpha^{t}D_{x}U\Lambda U^{t}D_{x}\alpha
\end{align*}
And using \ref{eq:norm-bound},
\begin{align*}
    \alpha^{t}D_{x}U\Lambda U^{t}D_{x}\alpha
    &\geq \lambda_{p}\cdot\alpha^{t}D_{x}UU^{t}D_{x}\alpha \\
    &= \lambda_{p}\cdot\alpha^{t}D_{x}D_{x}\alpha \\
    &= \lambda_{p}x^{t}D_{\alpha}^{2}x.
\end{align*}
Putting all the above together yields:
\begin{align*}
    x^{t}\left(\Theta\circ\alpha\alpha^{t}\right)x+x^{t}x-x^{t}D_{\alpha}^{2}x
    &\geq \lambda_{p}x^{t}D_{\alpha}^{2}x+x^{t}x-x^{t}D_{\alpha}^{2}x \\
    &\geq \lambda_{p}\alpha_{m}^{2}||x||^{2}+||x||^{2}-\alpha_{M}^{2}||x||^{2} \\
    &\propto \lambda_{p}\alpha_{m}^{2}+1-\alpha_{M}^{2}
\end{align*}
So for $x^{t}g\left(\Theta,\alpha\right)x$ to be non-negative, a sufficient condition is:
\[
    \lambda_{p} \cdot \alpha_m^2 \geq \alpha_{M}^{2}-1 \Rightarrow
    x^{t}g\left(\Theta,\alpha\right)x\geq0.
\]
\end{proof}

\textbf{Notes:} \\
\begin{itemize}
\item If $0<\left|\alpha_{j}\right|\le1$ for all $j$,
$g\left(\Theta,\alpha\right)$ is \emph{always} semi-positive definite
as $\frac{\alpha_{M}^{2}-1}{\alpha_{m}^{2}}\leq0$ and $\lambda_p\geq0$.
\item If $\max_{j}\alpha_{j}>1$, $\Theta$ must be positive definite ($\lambda_{p}>0$).
\end{itemize}

\subsection{Proof of Identifiability}
\label{subsec:Model-Identifiability-Proof}
We provide here more general conditions for identifiability of the multiplicative model, which accommodate cases where $\alpha_j = 0$ for some of the coordinates of $\alpha$. 
The main difficulty for identifying the multiplicative model are zero values in either $\Theta$ or $\alpha$. For example, when the $\Theta$ is the identity matrix,  $\alpha$ doesn't affect the model at all. Conversely, if $\alpha_1,...,\alpha_{p-1}=0$, the value of $\alpha_p$ does not affect the model, as all correlation coefficients in $\Theta \circ \alpha \alpha^{t}$ would regardless decay to 0.  

The main idea is that each non-zero entry in $\Theta$ gives some information on the two relevant values of $\alpha$. We therefore introduce matrices $A(M)$ that, for a set of indices $M$, code all index pairs for which $\Theta$ is non-zero. 

Introducing notation, denote the elementary vector by $e^{\left(j\right)}\in\mathbb{R}^p$
meaning that $ e^{\left(j\right)}_{i}=  1$ if $ i=j$ and $e^{\left(j\right)}_{i}=0$ otherwise, 
and write $e^{\left(j,k\right)} = e^{\left(j\right)} + e^{\left(k\right)}$.
Then form matrix $A$ of size $m\times p$ ($m = \binom{p}{2}$), where the $\ell$'th row is defined as:
\begin{equation} A_{\ell\cdot} = \boldsymbol{1}\left\{\Theta_{j_{\ell} k{_\ell}} \neq 0 \right\} \cdot e^{\left(j_{\ell},k{_\ell}\right)}, \quad  \ell = 1,...,m.
\label{eq:DefA}
\end{equation}
Index $\ell$ iterates over the ordered index pairs $\left(j_\ell,k_\ell\right) = \left(1,2\right), \left(1,3\right),...,\left(p-1,p\right)$. 
For a subset of indices $M \subseteq \{1,...,p\}$, we define $A(M)$ to be a filtered version of $A$, 
keeping only those rows where both indices are in $M$, and columns which correspond to $M$. ($A(M)$ would therefore have $\binom{|M|}{2}$ rows and $|M|$ columns). 

\begin{proposition}
Consider the model \eqref{eq:fullmodel} using the multiplicative link parametrized by $(\Theta, \alpha)$, and further assume the coordinates of $\alpha$ are non-negative.   Let $M_+ = \{j: \alpha_j > 0\}$ be the set of indices where $\alpha$ is non-zero, and $M_0= \{j: \alpha_j = 0\}$ the complementary set. Let $p_+$ be the size of $M_+$. 
Then the following conditions are sufficient for identifiability: 
\begin{enumerate} 
\item For every $j \in M_0$, exists $k \in M_+$ so that $\Theta_{jk} \neq 0$.
\item The rank of the matrix defined on the non-zero $\alpha$ coordinates $A(M_+)$ is $p_+$. 
\end{enumerate}
\end{proposition}
%$M_0 = \{j : \alpha_j = 0 \}$ indices wh.  $\alpha = (\alpha_0, \alpha_+)$ where \alpha_0
%Write (Consider the model parametrized 
%The model \eqref{eq:fullmodel} is identifiable under the following conditions:
%for the multiplicative link function if $rank(A) = p$.
%\end{proposition}

\begin{proof}
Denote $T\left(Y\right)=\left\{ \bar{R}_{h},\bar{R}_{d}\right\}$. 
Our goal is to show that the map $(\Theta, \alpha) \mapsto E\left[T\left(Y\right)\right]$ is one-to-one. 

Under the multiplicative link, $E\left[T\left(Y\right)\right]=\left\{ \Theta,\Theta\circ\alpha\alpha^{t}\right\}$. Therefore $E[T_1(Y)]=\Theta$ is trivially one-to-one with respect to $\Theta$. 
We are left to prove that for $\alpha,\beta\in\mathbb{R}^{p}$,
if $\Theta\circ\alpha\alpha^{t}=\Theta\circ\beta\beta^{t}$ then $\alpha=\beta$.

First, we show that the zero-set for vectors $\alpha$ and $\beta$ are identical.
Consider a non-zero $\alpha$ coordinate ($j \in M_+$).
According to condition (2), $A(M_+)$ should be of rank $p_+$, but only columns corresponding to indices in $M_+$ are non-zero. Hence, there should be at least one non-zero entry for each column of $A(M_+)$. In particular, for $j 
\in M_+$ there exists $k\in M_+$ such that $\Theta_{jk} \neq 0$. Therefore $\Theta_{jk}\alpha_j\alpha_k \neq 0$. Then $\Theta_{jk}\beta_j\beta_k \neq 0$ and in particular $\beta_j \neq 0$. Next consider a coordinate $\alpha_j \in M_0$. From condition (1), there exists coordinate $k \in M_+$ so that $\Theta_{jk}\neq 0$. In particular, $0 = \alpha_j \alpha_k \Theta_{jk} = \beta_j \beta_k \Theta_{jk}$. $\Theta_{jk}, \beta_k \neq 0$, so $\beta_j = 0$. 

We are left to show that for $j \in M_+$, $\alpha_j = \beta_j$. We have $p_+$ such parameters,
and without loss of generality assume $M_+ = \{1,...,p_+\}$.
For any $j,k \in M_+$ for which $\Theta_{jk} \neq 0$, we can form the equation
\[ |\Theta_{jk}|  \cdot \alpha_j \cdot \alpha_k = |\Theta_{jk}|  \cdot \beta_j \cdot \beta_k \]
Taking the natural logarithm, 
\[ \ln\left|\Theta_{jk}\right|  + \ln\alpha_j  + \ln\alpha_k = \ln\left|\Theta_{jk}\right| + \ln\beta_j +
 \ln\beta_k .\]

Denoting $c_j = \ln\alpha_j - \ln\beta_j$, we get for each non-zero index in $\Theta$ an equation $c_j + c_k= 0$. Gathering these constraints, we get a linear equation set in $(c_1,...,c_{p_+})$ with a design matrix equivalent to $A(M_+)$ \eqref{eq:DefA}:
\[
    \underset{\tilde{m}\times p_+}{
        A (M_+)
                  % \begin{array}{ccccccc}
            %     1 & 1 & 0 &  & \cdots &  & 0\\
            %     \vdots &  &  &  &  &  & \vdots\\
            %     1 & 0 &  & \cdots &  & 0 & 1\\
            %     0 & 1 & 1 & 0 & \cdots &  & 0\\
            %     \vdots &  &  &  &  &  & \vdots\\
            %     0 & 1 & 0 & \cdots & 0 & 1 & 0\\
            %     \vdots &  &  &  &  &  & \vdots\\
            %     0 & \cdots &  & 0 & 1 & 1 & 0
            % \end{array}   
    }\cdot
    \underset{{p_+}\times1}{
        \left[
            \begin{array}{c}
                c_{1}\\
                \vdots\\
                c_{p_+}
            \end{array}
        \right]
    }=
    \underset{\tilde{m}\times1}{
        \left[
            \begin{array}{c}
                % c_{12}\\
                % \vdots\\
                % c_{p-1,p}
                \vdots\\
                0\\
                \vdots
            \end{array}
        \right].
    }
\]
Here $\tilde{m} = \binom{p_+}{2}$. Given that $A(M_+)$ is full rank, the only solution is $(c_1,...,c_{p_+}) = \mathbf{0}$, meaning that $\ln\alpha_j = \ln\beta_j$ for every $j \in M_+$. We conclude $\alpha_j = \beta_j$ from the monotonicity of the $\ln$ function.
\end{proof}

\section{Supplementary Figures}
\begin{figure}[!h]
\begin{center}
    \includegraphics[scale=0.5]{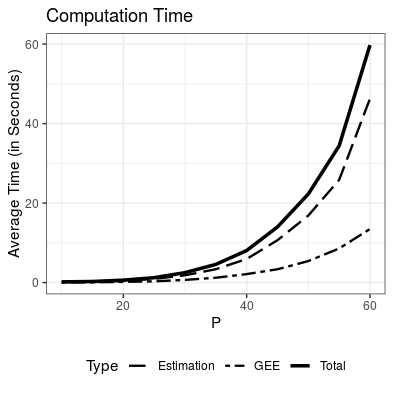}
    \caption{\small Computation time as a function of $p$, showing a
    complexity of $O\left(p^{4.6}\right)\approx O\left(m^{2.3}\right)$.
    This is the complexity of inverting an $m\times m$ matrix.}
    \label{fig:time}
\end{center}
\end{figure}

\begin{figure}[!th]
\begin{center}
    \includegraphics[scale=0.5]{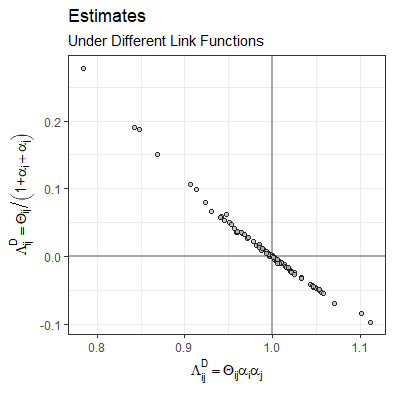}
    \includegraphics[scale=0.5]{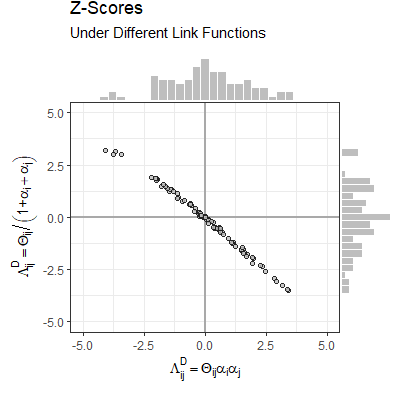}
    \caption{\small Analysis of the TGA data with different link functions. On the left, quotient link function estimates ($\Theta_{ij}\slash\left(1+\alpha_{i}+\alpha_{j}\right)$) versus multiplicative estimates ($\Theta_{ij}\alpha_{i}\alpha_{j}$).
    On the right, the respective Z-scores.
    Both links suggest similar results.}
    \label{fig:tga-link-estimates}
\end{center}
\end{figure}

\newpage
\section{Full Results on TGA Data}
\label{subsec:full-tga-results}
We present all regions with a statistically significant effect identified using the multiplicative link function.
\begin{center}
    \begin{singlespace}
    \begin{tabular}{|c|c|c|c|c|}
        \hline 
        \thead{\textbf{Index}} & \thead{\textbf{Region}} & \thead{$\boldsymbol{\tilde{\alpha}}$
        % \footnote{see equation \ref{\label{eq:bias-correction}}}
        \\ \textbf{(s.d)}} & \thead{\textbf{Z-value}} & \thead{\textbf{p-value} \\ (BH-Adj.)}
        \tabularnewline
        % \hline 
        \hline 
        \makecell{11} & \makecell{Inferior Frontal Gyrus, Opercular (L) \\ (F3OP)}  & \makecell{1.049 \\ (.017)} & \makecell{+2.83} & \makecell{.0397}
        \tabularnewline
        \hline 
        \makecell{37} & \makecell{Hippocampus (L) \\ (HIP)}  & \makecell{0.907 \\ (.025)} & \makecell{-3.68} & \makecell{.005}
        \tabularnewline
        \hline 
        \makecell{41} & \makecell{Amygdala (L) \\ (AMYG)}  & \makecell{0.848 \\ (.037)} & \makecell{-4.08} & \makecell{.002}
        \tabularnewline
        \hline 
        \makecell{42} & \makecell{Amygdala (R) \\ (AMYG)}  & \makecell{0.784 \\ (.037)} & \makecell{-5.84} & \makecell{<.0001}
        \tabularnewline
        \hline 
        \makecell{45} & \makecell{Cuneus (L) \\ (Q)}  & \makecell{1.054 \\ (.016)} & \makecell{ +3.43} & \makecell{.0087}
        \tabularnewline
        \hline 
        \makecell{47} & \makecell{Lingual Gyrus (L) \\ (LING)}  & \makecell{1.055 \\ (.019)} & \makecell{+2.94} & \makecell{.0316}
        \tabularnewline
        \hline 
        \makecell{50} & \makecell{Superior Occipital Lobe (R) \\ (O1)}  & \makecell{1.05 \\ (.047)} & \makecell{+3.39} & \makecell{.0087}
        \tabularnewline
        \hline 
        \makecell{81} & \makecell{Superior Temporal Gyrus (L) \\ (T1)} & \makecell{1.053 \\ (.016)} & \makecell{+3.19} & \makecell{.0154}
        \tabularnewline
        \hline 
        \makecell{83} & \makecell{Temporal Pole: Superior Temporal Gyrus (L) \\ (T1P)} & \makecell{0.913 \\ (.025)} & \makecell{-3.41} & \makecell{.0087}
        \tabularnewline
        \hline 
        \makecell{87} & \makecell{Temporal Pole: Middle Temporal Gyrus (L) \\ (T2P)}  & \makecell{0.842 \\ (.042)} & \makecell{-3.75} & \makecell{.005}
        \tabularnewline
        \hline 
    \end{tabular}
    \end{singlespace}
\end{center}

\newpage
\section{Sampling procedures for simulations}

\label{subsec:algorithms}
\begin{algorithm}[ht]
    \caption{Generate Sample Parameters}
    \label{algorithm:sample-pars}
    \SetKwInOut{Input}{Input}
    \SetKwInOut{Output}{Output}
    
    \Input{$p$, $\alpha_{prop}$, $\alpha_{range}$}
    \Output{$\Theta$, $\alpha$}
    
    \begin{enumerate}
        \item $x\leftarrow\mathcal{MN}_{2p\times p}\left(0,I_{2p},I_{p}\right)$
        \item $x\leftarrow x^{t}x$
        \item $\Theta\leftarrow D_{x}^{-\nicefrac{1}{2}}\cdot x\cdot D_{x}^{-\nicefrac{1}{2}}$
        \item  $\alpha_{j}\leftarrow Unif\left(\min\left(\alpha_{range}\right),\max\left(\alpha_{range}\right)\right)$ for $j\in \left\{1,...,\alpha_{prop}\cdot p\right\}$
        \item $\alpha_{j}\leftarrow1$ for  $j \in \left\{p\cdot\alpha_{prop}+1,...,{p} \right\}$
        \item Return $\Theta,\alpha$
    \end{enumerate}
\end{algorithm}
\begin{algorithm}[h!]
    \caption{Generate ARMA-Based Covariance Matrix}
    \label{algorithm:arma-mvnorm}
    \SetKwInOut{Input}{Input}
    \SetKwInOut{Output}{Output}

    \Input{$V\in\mathbb{M}_{d}\left(\mathbb{R}\right)$,
    $n\in\mathbb{N}$, $a\in\mathbb{R}^{p}$, $m\in\mathbb{R}^{q}$}
    \Output{Symmetric Semi-Positive Matrix}
    
    \begin{enumerate}
        \item $X\leftarrow\mathcal{MN}_{n\times d}\left(0,I_{n},V\right)$
        \item $Y\leftarrow X$
        \item For $i \in \left\{1,...,n\right\}$:
        \begin{enumerate}
            \item $Y_{i,\cdot}\leftarrow\sum_{j=1}^{q}m_{j}X_{i-j,\cdot}$
            \item $Y_{i,\cdot}\leftarrow\sum_{j=1}^{p}a_{j}Y_{i-j,\cdot}$
        \end{enumerate}
        \item $W\leftarrow Y^{t}Y\slash n$
        \item Return $W$
    \end{enumerate}
\end{algorithm}
\begin{algorithm}[h!]
    \caption{Generate Samples}
    \label{algorithm:gen-samples}
    \SetKwInOut{Input}{Input}
    \SetKwInOut{Output}{Output}

    \Input{$\Lambda \in \mathbb{M}_p\left(\left[-1,1\right]\right)$, $n\in\mathbb{N}$, $T\in\mathbb{N}$, $a\in\mathbb{R}^{p}$, $m\in\mathbb{R}^{q}$}
    \Output{3-Dimensional Array}
    
    \begin{enumerate}
        \item $x_j \leftarrow Unif\left(\sqrt{10},10\right)$ for $j \in
        \left\{1,...,p\right\}$
        \item $\Xi\leftarrow D_x\cdot\Lambda\cdot D_x$
        \item For $i\in\left\{ 1,...,n\right\} $:
        \begin{enumerate}
        \item If $a,m \neq \emptyset$:
        $W_{i}\leftarrow Algorithm\,\ref{algorithm:arma-mvnorm}\left(\Xi,T,a,m\right)$ \\
         Else:
            $W_{i}\leftarrow Wishart_{p}\left(\Xi,T\right)$
        \item $R_{i}\leftarrow D_{W_{i}}^{-\nicefrac{1}{2}}\cdot W_{i}\cdot D_{W_{i}}^{-\nicefrac{1}{2}}$
        \end{enumerate}
        \item Return $\left\{R_{1},...,R_{n}\right\}$
    \end{enumerate}
\end{algorithm}

\end{document}